\newtheorem{theorem}{Theorem}
\theoremstyle{definition}
\newcommand{\newc}{\newcommand}
\newc{\beq}{\begin{equation}}
\newc{\eeq}{\end{equation}}
\newc{\kt}{\rangle}
\newc{\br}{\langle}
\newc{\beqa}{\begin{eqnarray}}
\newc{\eeqa}{\end{eqnarray}}
\newc{\pr}{\prime}
\newc{\longra}{\longrightarrow}
\newc{\ot}{\otimes}
\newc{\rarrow}{\rightarrow}
\newc{\h}{\hat}
\newc{\bom}{\boldmath}
\newc{\btd}{\bigtriangledown}
\newc{\al}{\alpha}
\newc{\be}{\beta}
\newc{\ld}{\lambda}
\newc{\sg}{\sigma}
\newc{\p}{\psi}
\newc{\eps}{\epsilon}
\newc{\om}{\omega}
\newc{\mb}{\mbox}
\newc{\tm}{\times}
\newc{\ra}{\rightarrow}
\newc{\non}{\nonumber}
\newc{\ul}{\underline}
\newc{\hs}{\hspace}
\newc{\longla}{\longleftarrow}
\newc{\ts}{\textstyle}
\newc{\f}{\frac}
\newc{\df}{\dfrac}
\newc{\ovl}{\overline}
\newc{\bc}{\begin{center}}
\newc{\ec}{\end{center}}
\newc{\dg}{\dagger}
\newc{\T}{\mathcal{U}}
\newc{\Tp}{\mathcal{V}}
\newc{\J}{\mathsf{J}}
\newc{\sfL}{\mathsf{L}}
\newc{\C}{\mathsf{C}}
\newc{\B}{\mathsf{M}}
\newc{\V}{\mathsf{V}}
\newc{\tl}{\tilde}
\begin{document}

\widetext

\title{Quantum walk on a toral phase space}

\author{Sivaprasad Omanakuttan}
\email[]{sivaprasadto0811@gmail.com}
\author{Arul Lakshminarayan}
\email[]{arul@physics.iitm.ac.in}
\affiliation{Department of Physics, Indian Institute of Technology Madras, Chennai 600036, India}

\date{\today}

\begin{abstract}
A quantum walk on a toral phase space involving translations in position and its conjugate momentum is studied in the simple context of a coined walker in discrete time. The resultant walk, with a family of coins parametrized by an angle is such that its spectrum is exactly solvable with eigenangles for odd parity lattices being equally spaced, a feature that is  remarkably independent of the coin. The eigenvectors are naturally specified in terms the ~$q-$Pochhammer symbol, but can also be written in terms of elementary functions, and their entanglement can be analytically found. While the phase space walker shares many features in common with the well-studied case of a coined walker in discrete time and space, such as ballistic growth of the walker position, it also presents novel features such as exact periodicity, and formation of cat-states in phase-space. Participation ratio (PR) a measure of delocalization in walker space is studied in the context of both kinds of quantum walks; while the classical PR increases as $\sqrt{t}$ there is a time interval during which  the quantum walks display a power-law growth $\sim t^{0.825}$. Studying the evolution of coherent states in phase space under the walk enables us to identify an Ehrenfest time after which the coin-walker entanglement saturates.
\end{abstract}

\maketitle

\section{Introduction}

Quantum walks have been studied vigorously in the recent past and come in several flavors\citep{aharonov1993quantum,kempe2003quantum,
nayak2000quantum,childs2002example,
0305-4470-35-12-304,schmitz2009quantum}. Their potentially uses include quantum search algorithms \cite{aaronson2003quantum, childs2004spatial}
and universal quantum computation \citep{childs2009universal,lovett2010universal}. They come in continuous and discrete time versions and in many different settings \cite{Childs2010}\cite{bru2016quantum}.  In the simplest one, there is a two-dimensional coin space and a linear lattice space of discrete states in which the walker can jump by a step forward or backward. This quantizes the simplest model of a classical random walk which is recovered if the state of the coin is measured at every step \citep{kempe2003quantum,nayak2000quantum}. A quantum walk avoids such measurement and the peculiarities of the resulting dynamics is due to quantum interference between many possible paths. It is well-known that this results in  the walker's standard deviation increasing linearly in time ($\sim t$), in contrast to the case of the  diffusive classical walker ($\sim t^{1/2}$) \cite{kempe2003quantum}. 

One motivation for the present study is to introduce the non-commutative aspect, so central to quantum mechanics, in the walker dynamics. As the basic non- commutativity is between position and momentum, we formulate the walk as happening in ``phase space". In the quantization of classically chaotic systems, the non-commutative nature of conjugate variables has dramatic effects and effectively smoothens the classical mixing and can lead to dramatic quantum localization effects. For example a classical kicked rotor can spread diffusively in momentum and get completely localized quantum mechanically \cite{10.1007/BFb0021757,PhysRevLett.73.2974}. On the other hand quantum walks have an opposite tendency and spread faster than classical random walks. Of course the understanding of ``classical" in these two contexts are different, one being the standard dimensionless Planck constant tending to zero, while in the other, it is the effect of frequent projective measurements. 

Another motivation is that the study of these almost natural operators leads to solvable models with a very different spectral nature than that of the standard quantum walks hitherto examined. For example the translational invariance that leads to momentum conservation and makes the Fourier transform block-diagonalize the operator in the walk space
is now broken in the phase space walk. Yet, quite surprisingly, these models are still analytically tractable having equally spaced eigenangles for a continuous family of coin operators. They also lead to a linear increase in the standard deviation of both the position as well as the momentum.
In particular the phase space topology that we will study is the torus, that results from imposing periodic boundary conditions on both position and momentum.
The advantage of treating quantum algorithms on a torus has already been explored  
\citep{miquel2002quantum}, and it seems natural to also study quantum walks in this setting. 

The generic structure of a coin with $d$ states is comprised of a coin step and a walker step. The walker step is the unitary operator:
\beq
U_w=\sum_{i=0}^{d-1} A_i \otimes |i \kt \br i |,
\eeq
where $A_i$ are unitary operators on the walker space and the $\{|i\kt \}$ form
an orthonormal basis in the coin space. The coin dynamics is given by $\mathds{1}_w \otimes U_c$, where $U_c$ is some unitary operator on the coin space. The quantum walk is the combination
\beq
U_A=  U_w (\mathds{1}_w \otimes U_c),
\eeq
and the dynamics is simply powers $U_A^n$ of this operator. The canonical walk consists of the case when $d=2$ and $A_1=U$ and $A_2=U^{-1}$, where $U$ is the
translation, or shift, operator on the one-dimensional lattice. Thus the walk can be interpreted as being either ``forward" or ``backward" in position. The present work
explores the case when $d=2$ still but $A_1$ and $A_2$ do not commute. In particular it is natural to consider $A_2$ as being diagonal in the space of lattice states. In other words $A_1$ is a position translation operator while $A_2$ is a momentum translation operator, and in this sense the walk is a phase space walk. While  $A_i$ can be any unitary operator, those of greatest interest must be those that are local in some sense, so that there is a notion of continuous transport in some space. 
The case of generic $A_i$ could also have relevance and interest, but we do not consider them here. 

As far as we are aware this scenario has not been explored in the literature. The scenarios that have been studied and which may be related are of two kinds. One 
also calls it a quantum walk in phase space, but essentially the lattice is visualized as ``clock states" with equally spaced angles \cite{xue2008quantum}. Thus the walk is on a circle in phase space, with the phase space being action-angle. This is rather close to the walk on the line but for boundary conditions, in particular the two operators $A_1$ and $A_2$ still commute. The other is called an ``electric walk" in the literature and more closely allied to the present work. However there are two walker steps, in one of which $A_1$ and $A_2$ are the position translation operator and its inverse, while the other consists of the momentum translation and its inverse. Actually the set-up of the electric walk is more general than this, as the second step is interpreted as the effect of an electric field on the (charged) walker \citep{genske2013electric,di2014quantum,
bru2016electric}.  Another work where a similar structure has appeared previously concerns  quantum walks in non-Abelian discrete gauge theory   \cite{arnault2016quantum}. Thus while there are similar models that have been studied, we believe there is value to studying this particular variation, because of its simple interpretation and mathematical structure.

The plan of the paper is as follows: in section II, after reviewing briefly the discrete quantum walk in position space, we formulate the discrete version of the quantum walk on a toral phase space, and also discuss classical aspects of the walk. The  spectra of the phase space quantum walk is then solved for exactly. It is found that the eigenangles are rational multiples of $\pi/N$ if there are odd number ($N$) of lattice sites, and hence there exists a time $(2 N)$ when the walker dynamics becomes identity. The eigenvectors can be written in terms of the somewhat esoteric $q-$Pochhmammer symbols  which are fundamental to the theory of $q-$series \cite{koekoek1996askey,gasper} used in generalized Hypergeometric functions and combinatorics. It maybe remarked that the usual conifguration space walk with periodic boundary conditions (referred to below as CSW, as opposed to the phase space walk, which is referred to as PSW) has a more complex spectrum despite having translational symmetry and that it does not have the periodicity that is observed in the phase space walk \cite{nayak2000quantum}. The phase space walk is studied for a whole family of coins parametrized by an angle, and certain remarkable results of the spectra are found. Notably the eigenangles do not change across the family, while all the complexity of the walk is encoded in the eigenstates, Also the entanglement in the eigenstates is exactly computed.
 
In section III, we  discuss the evolution of states using the walk, and study both position eigenstates as well as coherent states. One measure that we study in addition to the standard deviation is the participation ratio of the walker. This measures the ``delocalization" of the walker across the lattice space. This can distinguish between ballistic spreading with linear growth of standard deviation from a non-trivial walk. We provide evidence that both the usual quantum walk on the line and for the walk in phase space the participation ratio increases as $\sim t^{0.825}$, while the classical walk participation ratio increases slower as $\sim \sqrt{t}$. It is shown that the participation ratio and standard deviation share features with the usual CSW. In particular it is seen here that the quantum participation ratio increases at a rate that is larger than the classical. The evolution of coherent states gives rise to ``cat states" in phase space and is akin to similar results found recently for the Gaussian states in one-dimensional configuration space walks  \cite{zhang2016creating}. The entanglement between walker and coin is also found and the phase space cat-states are essentially formed when the entanglement saturates. We end with a summary and discussions in section IV.

\section{Discrete quantum walk in phase space: Definition and spectra}
The standard quantum walk with an orthogonal matrix for the coin operator in configuration space (CSW) can be described in terms of the lattice (``position" eigenkets) states $\{ |n\kt \}$ as 
\begin{widetext}
\beq
\begin{aligned}
\label{eq:csw}
U_{csw}(\theta)= \left(|0\kt \br 0 |\otimes \sum_{n} |n+1 \kt \br n|+|1 \kt \br 1 |\otimes \sum_{n} |n \kt \br n+1| \right) \left(U_{\theta\,} \otimes \mathds{1}\right),
\end{aligned}
\eeq
\end{widetext}
where $U_{\theta\,}$ is given as 
\beq
\label{eq:coin}
U_{\theta\,}= \left( \begin{array}{rr} \cos\theta\, & \sin\theta\, \\ \sin\theta\,& -\cos\theta\, \end{array} \right).
\eeq
The walk considered in this paper is a simple modification:
\begin{widetext}
\beq
\label{eq:psw}
U_{psw}(\theta)= \left(|0\kt \br 0 |\otimes \sum_{n} |n+1 \kt \br n|+|1 \kt \br 1 |\otimes \sum_{n} e^{i \alpha n} \,|n \kt \br n| \right) \left(U_{\theta\,} \otimes \mathds{1}\right),
\eeq
\end{widetext}
where $\alpha$ is a real constant, and is referred to as a phase space walk. The lack of translational invariance is explicit and there is no conserved quasi-momentum. This is structurally similar to tight-binding Hamiltonians with on-site potential, in particular the Harper Hamiltonian with a potential that is $\cos(q)$. The case when $e^{i \alpha}$ is 
a root of unity, say $\omega=e^{2 \pi i /N}$ where $N$ is an integer is considered below. Other possibilities are of interest, however we wish to study this in a setting of a walk as on a phase space lattice of size $N \times N$. In other words the phase space walk in Eq.~(\ref{eq:psw}) is on a phase space torus and is on a finite $N-$ dimensional Hilbert space \cite{schwinger1960unitary}.

The position translation operator $\T$  acting on the position eigenkets shifts them:
\begin{equation}
\label{eq:position space1}
\T\vert{n}\rangle=\vert{n+1\pmod N}\rangle,
\end{equation} 
where $0\leq n<N$ and $N$ is the total number of lattice sites and we use periodic boundary condition, $\T^N=\mathds{1}$.
The momentum states $|\tl{k}\kt$ are eigenvectors of $\T$:
\begin{equation}
\label{eq:position space2}
  \T |\tl{k}\kt =\omega^{-k}|\tl{k}\kt,
 \end{equation} 
The momentum translation operator $\Tp$ is such that if $l=k+1 \, \text{mod}\, N$,
\begin{equation}
\label{eq:momentum space1}
\Tp |\tl{k}\kt=|\tl{l} \kt,\;\; \Tp|{n}\kt=\omega^{n}|{n}\kt.
\end{equation}
Their commutation relation is given by the Weyl relation
\begin{equation}
\label{eq:commutation relation}
\Tp\,  \T=\omega \, \T  \,\Tp.
 \end{equation}
 
The discrete Fourier transform which interchanges the role of position and momentum translation operator for the  periodic boundary condition  is given as \cite{schwinger1960unitary,saraceno1990classical} 
 \begin{equation}
 \label{eq: quantum fourier transform1}
\br n | \tl{k} \kt \equiv (G_N)_{nk}=\frac{1}{\sqrt{N}}\exp\left(2 \pi i kn/N\right).
  \end{equation}
The following transformation equations 
  \begin{equation}
  \label{eq: quantum fourier transform2}
  G_N \T G_N^{\dagger}=\Tp,\hspace{.3cm}G_N \Tp G_N^{\dagger}=\T^{\dagger},
  \end{equation}  
are readily verified.  
In terms of these operators the configuration space walk is $U_{csw}=(|{0}\kt \br{0}|\otimes \T +|{1}\kt\br{1}|\otimes \T^{\dagger})(U_{\theta\,}\otimes \mathds{1}_N)$ \cite{nayak2000quantum,kempe2003quantum,
PhysRevA.77.032326}. The quantum walk in phase space, and object of the present study is  
 \begin{equation}
 \label{eq:psw1}
 U_{psw}=\left [|{0}\kt \br{0}|\otimes \T +|{1}\kt\br{1}|\otimes \Tp\right](U_{\theta\,}\otimes \mathds{1}_N)
\end{equation} 
$U_{psw}$ can be written in  a block matrix form as, 
 \begin{equation}
  \label{eq:psw3}
 U_{psw}(\theta)=\left( \begin{array}{cc}
\cos\theta\, \T &\sin \theta\, \T \\
\sin\theta\,\Tp&-\cos\theta\,\Tp
\end{array} \right).
 \end{equation}
It maybe noted that the operators $\T$ and $\Tp$ are also referred to in the literature as a higher dimensional generalization of the Pauli matrices or ``clock" and ``shift" matrices and often denoted as $X$ and $Z$ \citep{0034-4885-67-3-R03,hegde2015unextendible} and the  classical limit of the quantum walk in phase space is described here along with its similarities to the classical random walk\citep{pearson1905problem,
chandrasekhar1943stochastic}.

 The scenario is very similar to the classical random walk such that depending on the  outcome of the coin the walker's next step is decided. The walker gets a boost of one unit in momentum or shifts in position depending on whether the outcome of the coin is head or tail respectively.  Let  $(p,q)$ be the representation of the walker in phase space where $p$ represents the momentum and $q-$ the position. Let the walker start from the origin $(0,0)$. After one time step  the walker will be either  at $(1,0)$ or $(0,1)$ with a probability of $\frac{1}{2}$ and after a two time steps  the walker can be at $(2,0)$ and $(0,2)$ with a probability of $\frac{1}{4}$ and at $(1,1)$ with a probability $\frac{1}{2}$. The above mentioned situation is depicted  in  Fig.~\ref{fig:Classical Random walk in Phase Space}. The   classical limit of quantum walk in phase space also follows a binomial distribution and tends in a standard manner to a Gaussian distribution as $t\to \infty$.

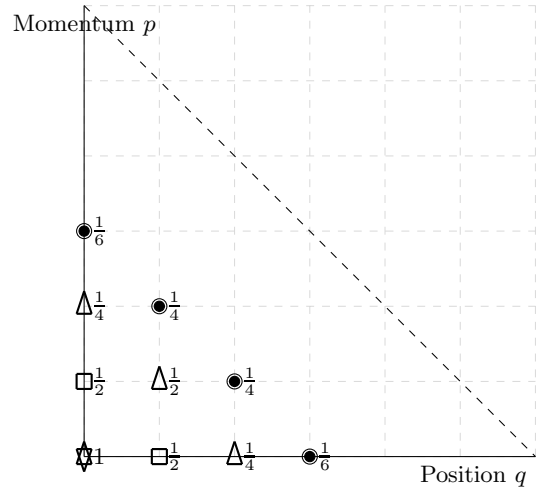
\begin{figure}[]

\centering
 \begin{tikzpicture}[square/.style={regular polygon,regular polygon sides=4}]
\draw[help lines, color=gray!30, dashed] (0,0) grid (6,6);
\draw[] (0,0)--(6,0) node[anchor=north  east] {Position $q$};
\draw[] (0,0)--(0,6)node[anchor=  north] {Momentum   $p$};

 
\node at (1.2,0){$\frac{1}{2}$};
\node at (0.2,1){$\frac{1}{2}$};
\node at (0.2,0){$1$};

 \node at (0.2,2){$\frac{1}{4}$};
  \node at (1.2,1){$\frac{1}{2}$};
        
 \node at (3.2,0){$\frac{1}{6}$};
  \node at (1.2,2){$\frac{1}{4}$};      
 \node at (0.2,3){$\frac{1}{6}$};
  \node at (2.2,1){$\frac{1}{4}$};
 \node at (2.2,0){$\frac{1}{4}$};       
    \draw[thick,-] (.9,.1)--(1.1,.1);
    \draw[thick,-] (1.1,.1)--(1.1,-.1);
    \draw[thick,-] (1.1,-.1)--(.9,-.1);
    \draw[thick,-] (.9,-.1)--(.9,.1);
     \draw[thick,-] (.9,.9)--(1.1,.9);
    \draw[thick,-] (1.1,.9)--(1.,1.2);
    \draw[thick,-] (1.,1.2)--(.9,.9);
     \draw[thick,-] (1.9,-.1)--(2.1,-.1);
    \draw[thick,-] (2.1,-.1)--(2.,.2);
    \draw[thick,-] (2.,.2)--(1.9,-.1);
     \draw[thick,-] (-.1,1.9)--(.1,1.9);
    \draw[thick,-] (.1,1.9)--(0,2.2);
    \draw[thick,-] (0.,2.2)--(-.1,1.9);

  \draw[thick,-] (.1,.9)--(.1,1.1);
    \draw[thick,-] (.1,1.1)--(-.1,1.1);
    \draw[thick,-] (-.1,1.1)--(-.1,.9);
    \draw[thick,-] (-.1,.9)--(.1,.9);
    
    \draw (3,0) circle (.1cm);
    \draw[black,fill=black] (3,0) circle (.5ex);
      \draw (0,3) circle (.1cm);
    \draw[black,fill=black] (0,3) circle (.5ex);
    \draw (2,1) circle (.1cm);
    \draw[black,fill=black] (2,1) circle (.5ex);
      \draw (1,2) circle (.1cm);
    \draw[black,fill=black] (1,2) circle (.5ex);
     \draw[thick,-] (0,0.2)--(-.1,-.1);
 \draw[thick,-] (-.1,-.1)--(.1,-.1);
 \draw[thick,-] (.1,-.1)--(0,0.2);
\draw[dashed](0,6)--(6,0);
 \draw[thick,-] (0,-0.2)--(.1,.1);
  \draw[thick,-] (.1,.1)--(-.1,.1);
   \draw[thick,-] (-.1,.1)--(0,-0.2);
\end{tikzpicture}
\caption{The classical random walk in phase space is shown with the corresponding occupation probabilities for times $0,1,2\,\text{and}\,3$ (star, squares, rectangles and circles respectively), whereas the dashed line shows the front on which the walker moves in phase space.}\label{fig:Classical Random walk in Phase Space}
\end{figure}

If the probability of a shift in position is $f$, and a shift in momentum is $1-f$ ($0 \leq f  \leq 1$), the probability of the walker being at $(p,q)$ after time $t$ is  
\begin{equation}
\label{eq:classical random walk in phase space}
G(q,p,t)={t \choose q} f^q (1-f)^{(t-q)}\delta(p+q-t).
\end{equation}  
Hence the probability distribution for the  random walk in phase space  is similar to  the usual random walk except that this happens along the line $p+q=t$. The quantum analog of the random walk in phase space  is  discussed in detail in the next section.
\subsection{Spectra of the phase space walk}
The stationary state properties, the eigenvalues and eigenvectors, are naturally of interest and enable solutions of time evolution problems as well. Here we show
that the spectra of the phase space walk can be analytically found.
\begin{theorem}
The eigenvector corresponding to eigenvalue $\lambda_k$ for $U_{psw}(\theta)$, for $\theta\neq 0$,  on $N$ lattice sites in the position basis $(|{n})\kt$ is given by 
\begin{equation}
\label{eq:eigenvector1}
|{\phi_k}\kt= \frac{1}{\sqrt{C_N(k)}} \sum_{n=0}^{N-1} \left(a_n(k)|{n}\kt|{0}\kt+b_n(k)|{n}\kt|{1}\kt \right)
\end{equation}
where $C_N(k)$ is a normalization constant, and $a_n(k)$, $b_n(k)$ are given by   ~$q-$Pochhammer symbols:  
\begin{subequations}
\label{eq:eigenvector2}
\begin{align}
a_n(k)=&\omega^{-\frac{n(n-1)}{2}}\frac{\left(-\sec\theta\,\, \lambda_k^{-1};\omega\right)_n}{\left(-\sec\theta\, \,\lambda_k;\omega^{-1}\right)_{n}},\label{subeqn:an}\\\
b_n(k)=&\omega^{-\frac{n(n-1)}{2}} \tan\theta\,\frac{\left(-\sec\theta\,\,\lambda_k^{-1};\omega\right)_n}{\left(-\sec\theta\,\, \lambda_k;\omega^{-1}\right)_{n+1}}\label{subeqn:bn}.
\end{align}
\end{subequations}
The eigenvalues $\lambda_k = \omega^{k/2}, \; k=0,1,\cdots, 2N-1, \; \text{if } N \text{is odd}$. If $N$ is even, they come in ``split pairs": $\lambda^{\pm}_{k}=
\exp (\pm i \alpha) \;\omega^k, \;k=0, 1, \cdots, N-1$, and 
where  $\alpha=\frac{1}{N}\cos^{-1}(\cos^N(\theta\,)), 0\leq \alpha N \leq \frac{\pi}{2}$. The normalization is determined by $C_N(k)=2N/(1+(-1)^k \cos^N\theta\,)$ if $N$ is odd and simply $=2N$ if $N$ is even.
\end{theorem}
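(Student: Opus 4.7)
The plan is to reduce the eigenvalue equation $U_{psw}(\theta)|\phi\kt=\lambda|\phi\kt$ to a scalar first-order recursion for the walker amplitudes, iterate it to expose the $q$-Pochhammer ratios, read off the spectrum from the periodic boundary condition, and finally evaluate the normalization using a standard partial-fraction sum over roots of unity.

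First I would expand $|\phi\kt=\sum_n(a_n|n\kt|0\kt+b_n|n\kt|1\kt)$ and apply $\T|n\kt=|n+1\kt$ together with $\Tp|n\kt=\omega^n|n\kt$ to rewrite the block eigenvalue equation as the two coupled relations $\cos\theta\, a_{n-1}+\sin\theta\, b_{n-1}=\lambda a_n$ and $\omega^n(\sin\theta\, a_n-\cos\theta\, b_n)=\lambda b_n$. The second is algebraic and immediately gives $b_n=\tan\theta\, a_n/(1+\sec\theta\,\lambda\omega^{-n})$, already matching (\ref{subeqn:bn}) once (\ref{subeqn:an}) is established. Substituting this back into the first relation and using $\cos^2\theta\,+\sin^2\theta\,=1$ produces the one-step recursion
\begin{equation}
a_n = a_{n-1}\,\omega^{-(n-1)}\,\frac{1+\sec\theta\,\lambda^{-1}\omega^{n-1}}{1+\sec\theta\,\lambda\omega^{-(n-1)}},
\end{equation}
and iterating from $a_0=1$ yields precisely the ratio of $q$-Pochhammer symbols in (\ref{subeqn:an}).

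Next I would impose the periodic boundary condition $a_N=a_0$. The elementary identity $\prod_{j=0}^{N-1}(1-x\omega^j)=1-x^N$ collapses both Pochhammer products, while the telescoping phase prefactor contributes $\omega^{-N(N-1)/2}$. For odd $N$ that phase is $+1$ and the $(-\sec\theta\,)^N$ signs in numerator and denominator agree, so the condition reduces to $\lambda^{2N}=1$ and the $2N$ equispaced eigenvalues $\lambda_k=\omega^{k/2}$ emerge. For even $N$ the phase equals $-1$, and after a short rearrangement the condition becomes $\lambda^N+\lambda^{-N}=2\cos^N\theta\,$; parametrising $\lambda=e^{i\beta}$ on the unit circle gives $\cos(N\beta)=\cos^N\theta\,$, producing the split pairs $\lambda_k^{\pm}=e^{\pm i\alpha}\omega^k$ with $\alpha=\frac{1}{N}\cos^{-1}(\cos^N\theta\,)$.

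For the normalization $C_N(k)=\sum_n(|a_n|^2+|b_n|^2)$ I would first note that because $|\lambda|=1$, complex conjugation swaps the $j$-th numerator factor of $a_n$ with its $j$-th denominator factor, so $|a_n|^2\equiv 1$ and $\sum_n|a_n|^2=N$ for every eigenvector. The $b$-sum would then be handled using the identity
\begin{equation}
\sum_{n=0}^{N-1}\frac{1}{|1-z\omega^n|^2}=\frac{N(1-|z|^{2N})}{(1-|z|^2)\,|1-z^N|^2},
\end{equation}
which follows from $|1-z\omega^n|^{-2}=(1-|z|^2)^{-1}\bigl[(1-z\omega^n)^{-1}+(1-\bar z\omega^{-n})^{-1}-1\bigr]$ and the standard geometric sum $\sum_n(1-z\omega^n)^{-1}=N/(1-z^N)$. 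Specialising to $z=-\sec\theta\,\lambda$ and absorbing the prefactor $\tan^2\theta\,/(1-\sec^2\theta\,)=-1$: for odd $N$ one has $\lambda^N=(-1)^k$, giving $|1-z^N|^2=(1+(-1)^k\sec^N\theta\,)^2$ and, after simplification, $C_N(k)=2N/(1+(-1)^k\cos^N\theta\,)$; for even $N$ the defining equation $\lambda^N+\lambda^{-N}=2\cos^N\theta\,$ forces $|1-z^N|^2=\sec^{2N}\theta\,-1$, which makes the $b$-sum equal $N$ independently of $k$ and yields $C_N=2N$. The hard part will be tracking the sign combinatorics cleanly across both parities of $N$ in the boundary condition and the final sum; once that is done, every other step is a direct calculation.
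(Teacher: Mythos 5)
Your derivation of the eigenvectors and eigenvalues follows essentially the same route as the paper: the same coupled recursions, elimination of $b_n$, iteration from $a_0=1$ to produce the $q$-Pochhammer ratio, and closure of the recursion around the ring using $(x;\omega)_N=1-x^N$ with the phase $\omega^{-N(N-1)/2}=(-1)^{N-1}$, yielding $\lambda^{2N}=1$ for odd $N$ and $\lambda^N+\lambda^{-N}=2\cos^N\theta$ for even $N$; I verified your intermediate recursion and sign bookkeeping and they match. (Do add the short counting step the paper makes explicit: each admissible root furnishes a distinct eigenvector of the unitary $U_{psw}(\theta)$, and since these exhaust the $2N$-dimensional space, \emph{all} such roots are eigenvalues rather than merely candidates.) Where you genuinely diverge is the normalization. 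The paper evaluates $\sum_n|b_n|^2$ by the Poisson summation formula plus explicit Fourier integrals for odd $N$, and by a Chebyshev generating-function resummation for even $N$ (its Appendix A), whereas you use the partial-fraction identity $|1-z\omega^n|^{-2}=(1-|z|^2)^{-1}\bigl[(1-z\omega^n)^{-1}+(1-\bar z\,\omega^{-n})^{-1}-1\bigr]$ together with $\sum_n(1-z\omega^n)^{-1}=N/(1-z^N)$ to obtain the closed form $\sum_n|1-z\omega^n|^{-2}=N(1-|z|^{2N})/\bigl[(1-|z|^2)\,|1-z^N|^2\bigr]$. I checked this identity and your specialization $z=-\sec\theta\,\lambda_k$: for odd $N$, $z^N=-(-1)^k\sec^N\theta$ gives $C_N(k)=2N/(1+(-1)^k\cos^N\theta)$, and for even $N$ the eigenvalue condition forces $|1-z^N|^2=\sec^{2N}\theta-1$, so the $b$-sum is $N$ and $C_N=2N$, exactly as stated. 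Your route is more elementary and treats both parities uniformly, needing only $z^N\neq1$, which holds automatically since $|z|=\sec\theta>1$ for $0<\theta<\pi/2$ (the endpoint $\theta=\pi/2$ is handled by taking limits, just as in the paper); the paper's Fourier/Chebyshev machinery is heavier but is reused in its Appendix B for the phase-weighted sum entering the eigenvector reduced density matrix, a sum your identity would also deliver with a minor modification.
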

\begin{proof}
The eigenvalue equation $ U_{psw} (\theta)|{\phi_k}\kt=\lambda_k |{\phi_k}\kt$ implies the following recursion relations between the coefficients $a_n(k)$ and $b_n(k)$
\begin{subequations}
\label{eq:iteration 1}
\begin{align}
\sin\theta\,\, a_n(k)-\cos\theta\,\, b_n(k)=\lambda_k b_n(k)\omega^{-n},\label{subeqn:ini1}\\ \cos\theta\, \, a_{n-1}(k)+\sin\theta\, \, b_{n-1}(k)= \lambda_k a_{n}(k)\label{subeqn:ini2},
\end{align}
\end{subequations} 
which are valid for all $0 \le n \le N-1$ and $a_{-1}(k)=a_{N-1}(k)$ while $b_{-1}(k)=b_{N-1}(k)$.
Using the two recursions, simple algebra that eliminates the $b$ variables, yields
\beq
a_n(k)=a_{n-1}(k) \left( \dfrac{\lambda_k^{-1} + \cos \theta\, \, \omega^{-n+1}}{\cos \theta\, + \lambda_k \, \omega^{-n+1}}\right).
\eeq
With $a_0(k)=1$, which can be assumed (provided it is non-zero) as we are going to fix the normalization later, the expression for $a_n(k)$ is given as 
\begin{equation}
\label{eq:iteration 2}
a_n(k)=\omega^{-\frac{n(n-1)}{2}}\prod_{j=0}^{n-1}\left(\frac{ 1+ \sec\theta\, \,\lambda_k^{-1} \omega^{j} }{1+\sec\theta\,\, \lambda_k \, \omega^{-j}} \right).
\end{equation}
It then follows from the fact that the $\lambda_k$ are on the unit circle, that  $|a_n(k)|=1$, and hence all the $a_n(k)$ are pure phases. Using  the ~$q-$Pochhammer symbol  defined  as:
 \begin{equation}
 \label{eq:q-Pochhammer}
 \begin{aligned}
 (x;q)_n=\prod_{j=0}^{n-1}(1-x \,q^j),
 \end{aligned}
 \end{equation}
 and the relation 
 \beq
 \label{eq:b-a:relation}
b_n(k) =\frac{a_n(k) \tan \theta\,}{1+\sec \theta\, \, \lambda_k\, \omega^{-n}}
\eeq 
 obtained from Eq.~\eqref{eq:iteration 1} gives $b_n(k)$, and hence the expressions for the eigenvectors in Eq.~(\ref{eq:eigenvector2}) follow. However these contain the as yet undetermined eigenvalues which we now turn to.
 
The eigenvalues can be calculated using the equation  $\cos\theta\,\, a_{N-1}(k)+\sin \theta\, \, b_{N-1}(k)=\lambda_k \, a_0(k)=\lambda_k$, which follows on putting $n=0$ in the  Eq.~\eqref{subeqn:ini2}. Using Eq.~(\ref{eq:b-a:relation}) this leads to
\beq
\lambda_k = a_{N-1}(k) \left( \dfrac{\sec \theta\, + \lambda_k \,  \omega^{-(N-1)}}{1+\sec \theta\, \, \lambda_k \,\omega^{-(N-1)}}\right).
\eeq
The eigenvector component $a_{N-1}(k)$ is now written in terms of the eigenvalue from Eq.~(\ref{subeqn:ini2}) and some algebra which takes into account that $\omega^{-N(N-1)/2}=(-1)^{N-1}$, leads finally to an equation containing only the eigenvalue that is sought: 
\beq
\dfrac{(- \sec \theta\, \, \lambda_k^{-1};\omega)_{N}}{(- \sec\theta\, \, \lambda_k;\omega^{-1})_{N}}=(-1)^{N-1}.
\eeq
Noting a ~$q-$Pochhammer  identity: $\left(x;\exp(2\pi i/n) \right)_n=1-x^n,$ gives
\beq
\label{eq:eigenvalue-q Pochammer}
1-(-\sec\theta\,\, \lambda_k^{-1})^N=(-1)^{N-1}\left(1-(-\sec\theta\,\, \lambda_k)^N\right).
\eeq
For odd values of $N$ the above equation simplifies to $\lambda_k^{2N}=1$, all dependence on $\theta\,$ remarkably disappearing. 
One can argue that this implies that eigenvalues are {\it  all} the $2N$-th roots of unity.
This follows from Eq.~(\ref{eq:eigenvector2}), each such root giving a different eigenvector of $U_{psw}(\theta)$, 
and as all the eigenvectors of a unitary operator are necessarily orthogonal, this forms a complete basis. 
Hence the eigenvalues for $N$ odd are  
\beq
\label{eq:eigenvalue odd}
\lambda_{k}=\omega^{k/2}, \;\;0\leq k\leq 2N-1.
\eeq 
Note that for odd values of $N$, one could write these as $\lambda_{l}^{\pm}=\pm \omega^l$, with $0 \leq l <N$. These coincide with the eigenvalues of $\T$ and $-\Tp$, which are the blocks that $U_{psw}(0)$ contains. Hence the family of operators $U_{psw}(\theta)$ have the same eigenvalues but different eigenvectors.

For  even values of $N$, letting $\lambda=\exp(i\beta)$, Eq.~\eqref{eq:eigenvalue-q Pochammer} implies that $\exp(i\beta N)+\exp(-i\beta N)=2\cos^N(\theta\,)$ and hence $\beta=\pm\alpha+2\pi l/N$ where $\alpha=\frac{1}{N}\cos^{-1}(\cos^N \theta\,)$ and $l$ is any integer. 
\beq
\lambda^{\pm}_l=\exp(\pm i \alpha)\omega^l, \;\;0\leq l < N.
\label{eq:gh}
\eeq
Thus the eigenvalues of $U_{psw}(\theta)$ change in this case with $\theta$, they start out doubly degenerate at $\theta=0$, become non-degenerate for $\theta>0$, and end up being equally spaced on the unit circle at $\theta=\pi/2$. The normalization constants follows from the $a_{n}$ and $b_n$, details are relegated to Appendix~(\ref{app:norm}). 
\end{proof}

Thus the spectrum of the walk $U_{psw}(\theta)$ presents interesting mathematical structures. The eigenvectors of the PSW have been written in terms of the $q-$Pochhammer symbol in Eq.~\eqref{eq:eigenvector2}. However they can be written simply in terms of trigonometric functions. For an eigenstate in a lattice with odd number of sites and eigenvalue $\omega^{k/2}$, 
\beq
\label{eq:eqa 1}
a_n(k)=\exp(-i \zeta_n(k)),
\eeq
 where for $n>0$ 
 \beq
 \label{eq:lll}
 \zeta_n(k)=\sum_{j=0}^{n-1}\left(\frac{2 \pi}{N}j-2\tan^{-1}\dfrac{\sin \frac{2\pi}{N}(j-\frac{k}{2})}{\cos \theta +\cos \frac{2\pi}{N}(j-\frac{k}{2})}\right),
 \eeq
 and $\zeta_0(k)=0$. The $b_n(k)$ can then be found using the relation Eq.~(\ref{eq:b-a:relation}). 
 
 The marginal cases $\theta=0$ and $\theta=\pi/2$ give more insight into the spectra of the walk and therefore we turn to these.
  \subsubsection{Case I: $\theta =0$}
 For $\theta=0$ the block matrix representation of $U_{psw}$ in Eq.~\eqref{eq:psw3} simplifies to a block diagonal form given as
\begin{equation}
  U_{psw}(0)=\left( \begin{array}{cc}
 \T & 0 \\
0&-\Tp
\end{array} \right).
 \end{equation} 
Hence the eigenvectors $|{\phi_k}\kt$ are direct products of the eigenvectors of $\T$ or $\Tp$ with $|{0}\kt$ or $|{1}\kt$ respectively and from their definitions these are momentum and site/position eigenstates respectively. For simplicity we discuss the odd $N$ case further. For the eigenvalue given by $\omega^{k/2}$, the corresponding state is 
\begin{equation}
\label{eq:eigenvec}
|{\phi_k}\kt=\begin{cases}
  \frac{1}{\sqrt{N}}\sum_{n=0}^{N-1}\exp(-\frac{2 \pi i n k }{2N})|{n}\kt|{0}\kt,& \text{if } k \,\,\text {is even}\\
           |{\frac{1}{2} (k-N)}\kt|{1}\kt    &\text{for}\,k \,\, \text{ odd.}
    \end{cases}
\end{equation}
The above is consistent with the expression in Eq.~(\ref{eq:lll}) when $\theta=0$. The eigenvectors 
of $\T$ are such that the $a_0 \neq 0$, and the analysis holds. Also the corresponding $b_n$ all vanish. This half of the spectrum is when $k$ is even (say $k=2l$, with $0 \leq l \le N-1$) the eigenvalues are simply $\omega^l$ and the eigenstates are momentum states in the walker space localized in momentum to $l=N-k/2$. The other half of the states, for $k$ odd, the component $a_0$ can vanish and needs to be treated specially. Going back to the basic equation in Eq.~(\ref{subeqn:ini1}) it follows that for $\theta=0$, either $b_n(k)=0$ or $\lambda_k \omega^{-n}=-1$, which implies that the state labelled by $k$ is localized at position $n=((k-N)/2)\, \text{mod}\, N$. 

Thus position delocalized (momentum localized) and localized states alternate on the eigenangle circle.The entanglement of the eigenvectors are zero for the case in which $\theta=0$, but become non-zero for $\theta>0$, when the states contain both these delocalized and localized parts. The delocalized part is associated with finding the walker state when the measurement of the coin results in state $|0 \kt$, while the walker is by and large localized when the measurement results in the state $|1\kt$. The localized part also gets increasingly delocalized as $\theta$ increases till $\theta=\pi/2$ when both parts are maximally delocalized, being pure phases. Some eigenvectors corresponding to $\theta\,=\pi/4$ are shown in the Fig.~\ref{fig:modulus}, where the phase of $a_n$ and $b_n$ are shown along with the magnitude of $b_n$, which is  the localized component of the state.

\begin{figure*}[]
\subfloat{\includegraphics[width =2.2in]{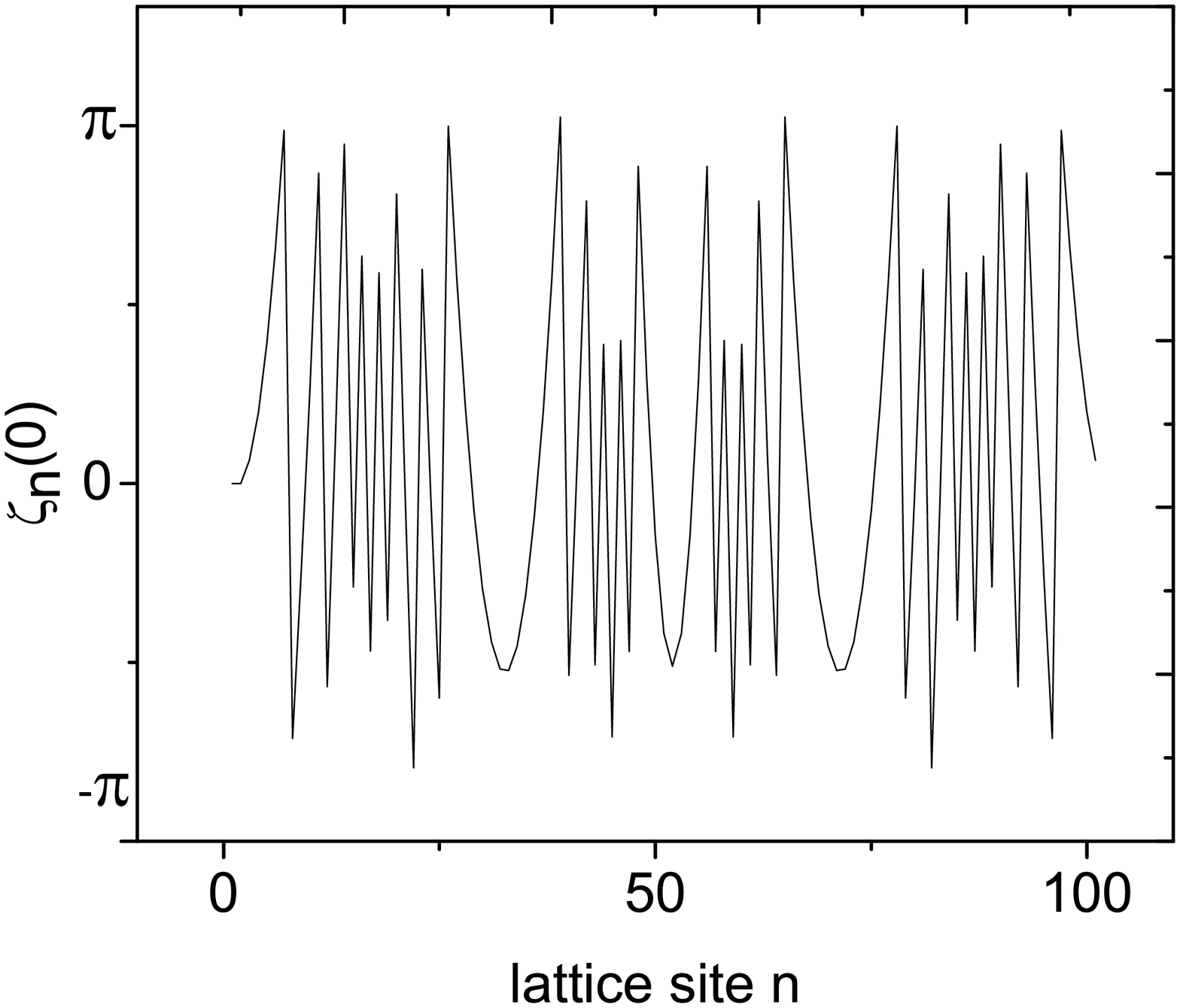}}
   \subfloat{\includegraphics[width =2.2in]{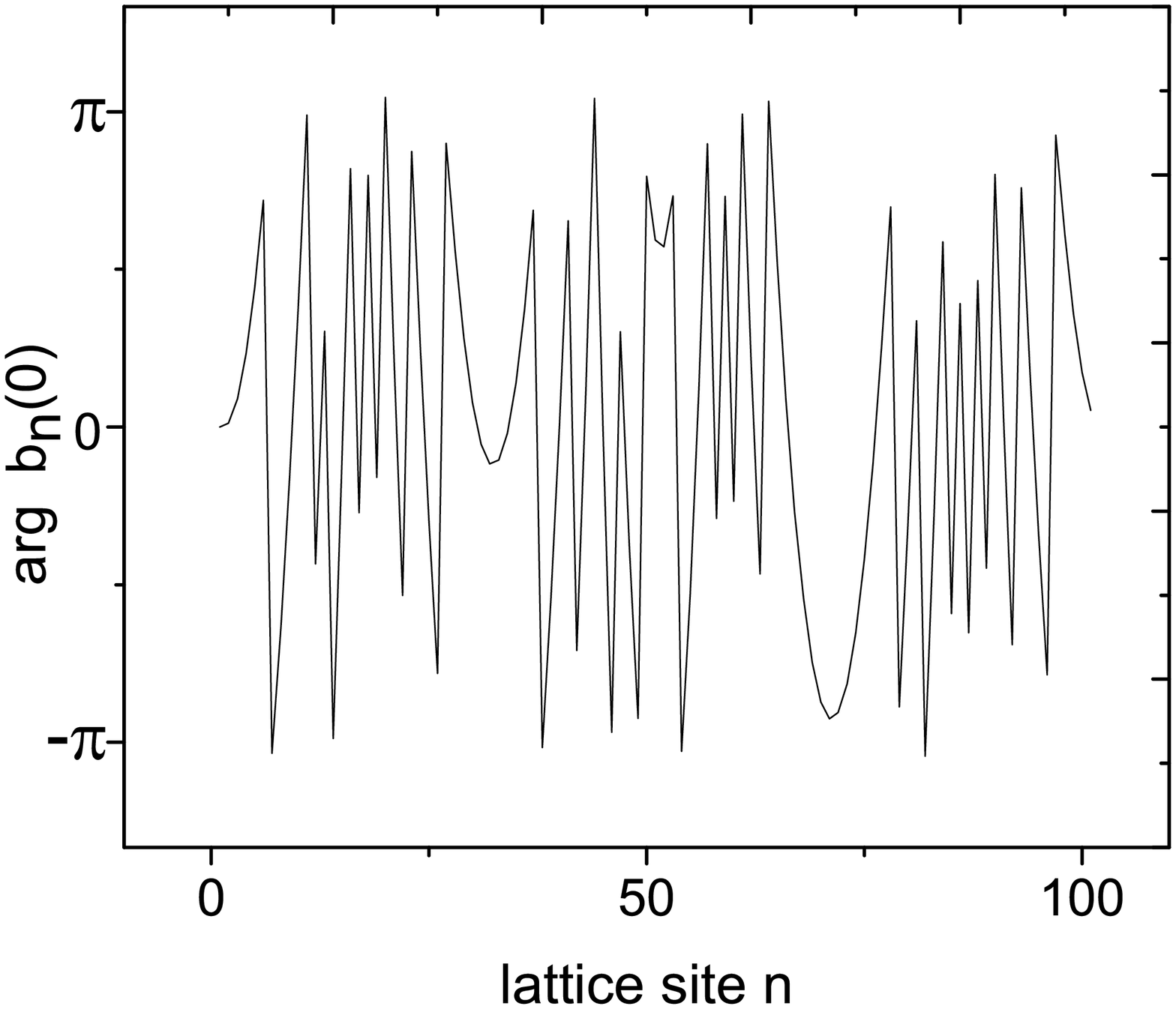}}
\subfloat{\includegraphics[width =2.2in]{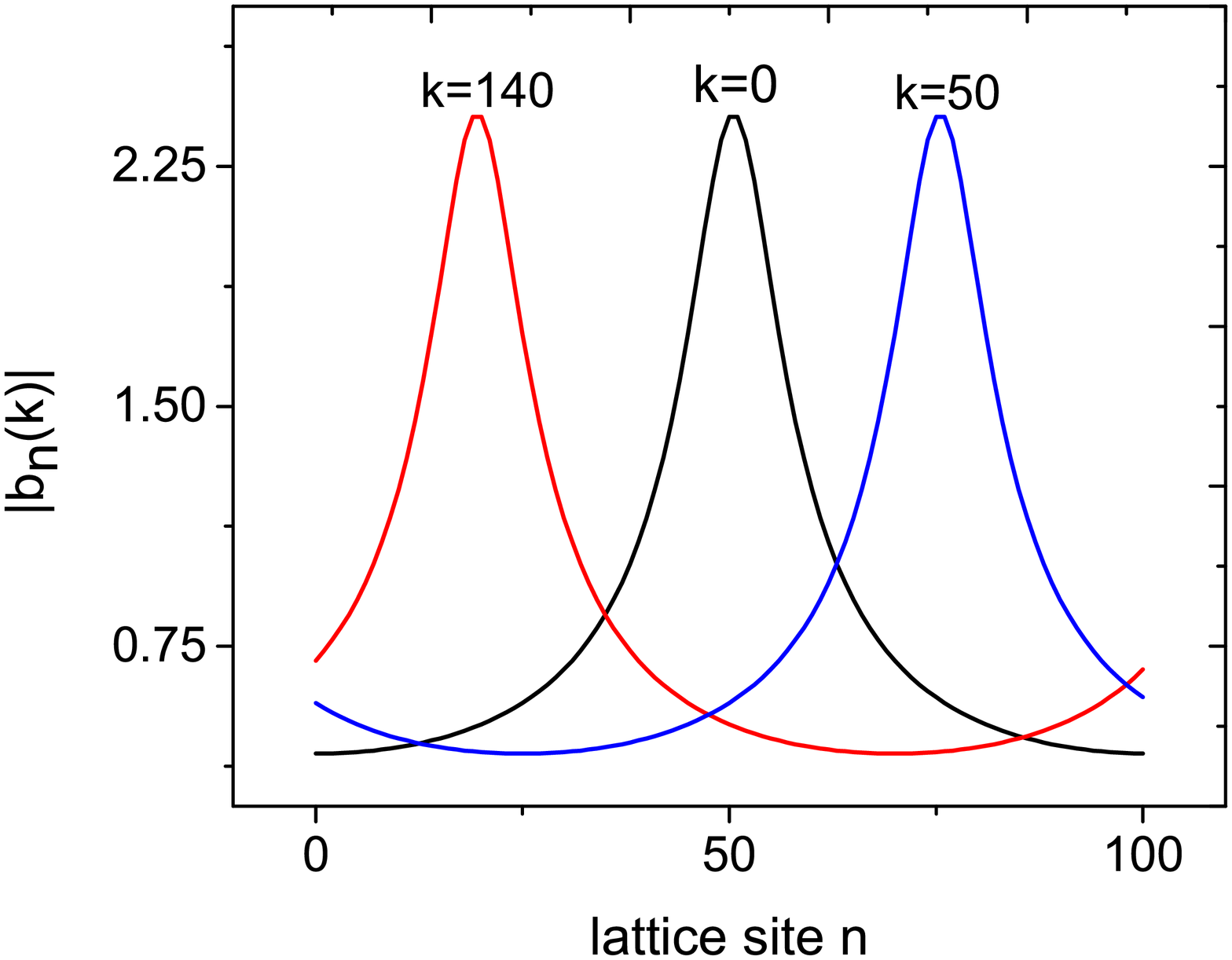}}
\\
\caption{The phases of  the eigenvector components $a_n(0)=\br n| \br 0| \phi_0\kt$ (left), and $b_n(0)=\br n| \br 1| \phi_0\kt$ (middle) are shown  while in the right is shown $|b_n(0)|$, along with $|b_n(50)|$ and $|b_n(140)|$ as an illustration of the structure of eigenstate of the PSW for the case $N=101$ and $\theta=\pi/4$.}\label{fig:modulus}
\end{figure*}

 \subsubsection{Case II: $\theta =\pi/2$}
 
 In this case using Eq.~(\ref{eq:lll}) 
 gives,
 \beq
 a_n(k)=\exp\left(\frac{ \pi i}{N} [n(n-1)-2 n k] \right),
 \eeq
and from Eq.~\eqref{eq:b-a:relation} one gets the other half: 
 \beq
 b_n(k)=\exp\left (\frac{  \pi i }{N}[n(n+1)-2 k n-k ] \right).
 \eeq
 Hence eigenvectors for the case of odd $N$ and $\theta=\pi/2$ are 
 \beq
 |{\phi_k}\kt=\frac{1}{\sqrt{2N}}\sum_{n=0}^{N-1} (\omega^{n^2-n-2 nk})^\frac{1}{2}\left[ |n\kt |0 \kt +\omega^{n-k/2} |n \kt |1\kt \right].
 \eeq
Thus in this case all the components are pure phases and the eigenstates are completely delocalized in both site and momentum space (see further below for discussions related to momentum space), and have maximum coin-walker entanglement. In fact it is interesting to calculate the evolution of entanglement with $\theta$. 

\subsubsection{Entanglement of Eigenvectors}
We concentrate again on the $N$ odd case for simplicity. From Eq.~\eqref{eq:eigenvector1}, the reduced density matrix of the 
coin in state $|\phi_k\kt$ is given as 
 \beq
\rho_{k}=\frac{1}{C_N(k)} \left( \begin{array}{ll} N &\sum_{n=0}^{N-1}a_n(k) b_n^*(k) \\\sum_{n=0}^{N-1}a_n^*(k) b_n(k)& C_N(k)-N \end{array} \right).
\eeq
Now using the Eq.~\eqref{eq:b-a:relation} yields,
\beq
\begin{aligned}
\label{eq:at1}
\sum_{n=0}^{N-1}a_n(k) b_n^*(k) =&\sin \theta \sum_{n=0}^{N-1}\frac{1}{\cos \theta+ \lambda_k \omega^{-n}}\\
=&(-1)^k \frac{N \tan \theta \cos^N \theta}{1+(-1)^k \cos^N \theta}.
\end{aligned}
\eeq
A proof of the sum appearing here is given in Appendix~(\ref{app:off diagonal}). 
Using the value of the normalization $C_N(k)$ given in Eq.~\eqref{eq:nn1} the reduced density of the coin is,
\beq
\rho_{k}=\frac{1}{{2}} \left( \begin{array}{rr} 1+(-1)^k \cos^{N}\theta & (-1)^k \cos^{N}\theta \tan \theta  \\(-1)^k \cos^{N}\theta \tan \theta& 1-(-1)^{k} \cos^{N}\theta \end{array} \right).
\eeq
The eigenvalues of the reduced density matrix are,
\beq
\mu_{k}^{\pm}=\frac{1}{2}( 1\pm \cos ^{N-1}\theta),
\eeq
with no dependence on the state index $k$, which implies that all eigenstates are uniformly entangled, indicating that local unitary operators may connect the eigenstates. Moreover for large enough $N$ and for $\theta$ not very close to $0$, these eigenvalues are nearly $1/2$ each and therefore all the eigenstates are nearly maximally entangled.

The von Neumann entropy  $(S_{vN})$ is the entanglement between coin and walker and is therefore
 \beq
 \begin{aligned}
 S_{vN}=&-\frac{1}{2}\left[( 1+ \cos ^{N-1}\theta)\ln(1+ \cos ^{N-1}\theta)\right]\\
 &-\frac{1}{2}\left[ ( 1-\cos ^{N-1}\theta)\ln(1- \cos ^{N-1}\theta) \right].
 \end{aligned}
 \eeq
 When $|\cos^{N-1}(\theta)| \ll 1$ this is approximately
 \beq
 S_{vN} \approx \ln(2)-\frac{1}{2}\cos ^{2N-2}\theta.
 \eeq
 The linear entropy is another widely used measure and for binary entropies it is monotonic with the von Neumann. 
 Define as $S_{l}=1-\Tr(\rho^2_k)$, this has a more explicit evaluation as
 \beq
 S_{l}=\frac{1}{2}(1-\cos ^{2N-2}\theta).
 \eeq
It is clear the eigenvectors go from being unentangled at $\theta=0$ to being maximally entangled at $\theta=\pi/2$. For large enough $N$, the increase is rapid, for example at $\theta=\pi/4$ corresponding to the Hadamard coin, the eigenstates linear entropy is uniformly $S_l=(1-2^{-N+1})/2$.

The PSW has a chiral symmetry \citep{asboth2016short} which implies the presence of pairs of eigenstates.
The  parity or reflection operator($R_N$) is $
 \br{n}|R_N|{n'}\kt=\delta[(N-n-n') \, \text{mod}\, N,0]$ in position (and also momentum) representations. This satisfies $R_N^{\dagger}=R_N$ and $
R_N\T R_N=\T^{\dagger},\,R_N \Tp R_N=\Tp^{\dagger}.
$
Defining another unitary operator  $\tilde{R}_{2N}$ as, $\tilde{R}_{2N}= U_{\theta}\otimes R_N$, where $U_{\theta}$ is the coin operator from Eq.~(\ref{eq:coin}), it follows that,
 \begin{equation}
\tilde{R}_{2N}U_{psw}\tilde{R}_{2N}=\left( \begin{array}{cc}
\cos\theta\,\T^{\dagger}& \sin\theta\,\Tp^{\dagger} \\
\sin\theta\,\T^{\dagger}&-\cos\theta\,\Tp^{\dagger}
\end{array} \right)=U_{psw}^{\dagger}.
\end{equation}
This implies that if $\lambda_k$ is an eigenvalue, so is $\lambda_k^*$,
and the corresponding eigenvectors are $|\phi_k\kt$ and $\tilde{R}_{2N}|\phi_k \kt$ respectively.

\section{Evolution of states under the phase space Hadamard walk}
This section is devoted to dynamical aspects of the walk and, unless otherwise stated, the coin is the Hadamard operator $H=U_{\pi/4}$. 
The probability distribution $p_n(t)$ for the phase space walk can be calculated using the stationary state properties discussed in the previous section. 

\subsection{Walker localized in position}

If the walker starts at the site origin $|n=0\kt$, with the coin-state $|{0}\kt$, the state of the walker after time $t$ is $U_{psw}^t(\pi/4)\ket{0}|{0}\kt$. Using the eigenvalue decomposition of the unitary operator the probabilities of finding the walker at a lattice site $n$ with coin-state $|{0}\kt$ or $|{1}\kt$ are,
\begin{equation}
\label{eq:prob5}
\begin{aligned}
p_n(t;0)=\left\vert \sum_{k=0}^{2N-1} \frac{1}{C_N(k)}\lambda_k^t a_n(k) \right\rvert^2, \\ 
p_n(t;1)=\left\vert \sum_{k=0}^{2N-1} \frac{1}{C_N(k)} \lambda_k^t b_n(k) \right\rvert^2.
\end{aligned}
\end{equation}
Here $a_n(k),b_n(k)$ are found from Eq.~\eqref{eq:eigenvector2} with $\theta=\pi/4$ and we have used $a_0(k)=1$. The probability of finding the walker at site $n$ after time $t$ is $p_n(t)=p_n(t;0)+p_n(t;1)$.

 All properties of the distribution can be calculated from the above equation efficiently, and there is no need for matrix diagonalization or powers. The plot of the probability distribution versus lattice sites using the above expression is given in Fig.~\ref{fig:Probability distribution} These bear a striking resemblance to the walker probability distributions for the case of the usual walk (CSW)\citep{kempe2003quantum}. 
\begin{figure*}
\subfloat{\includegraphics[width = 3in]{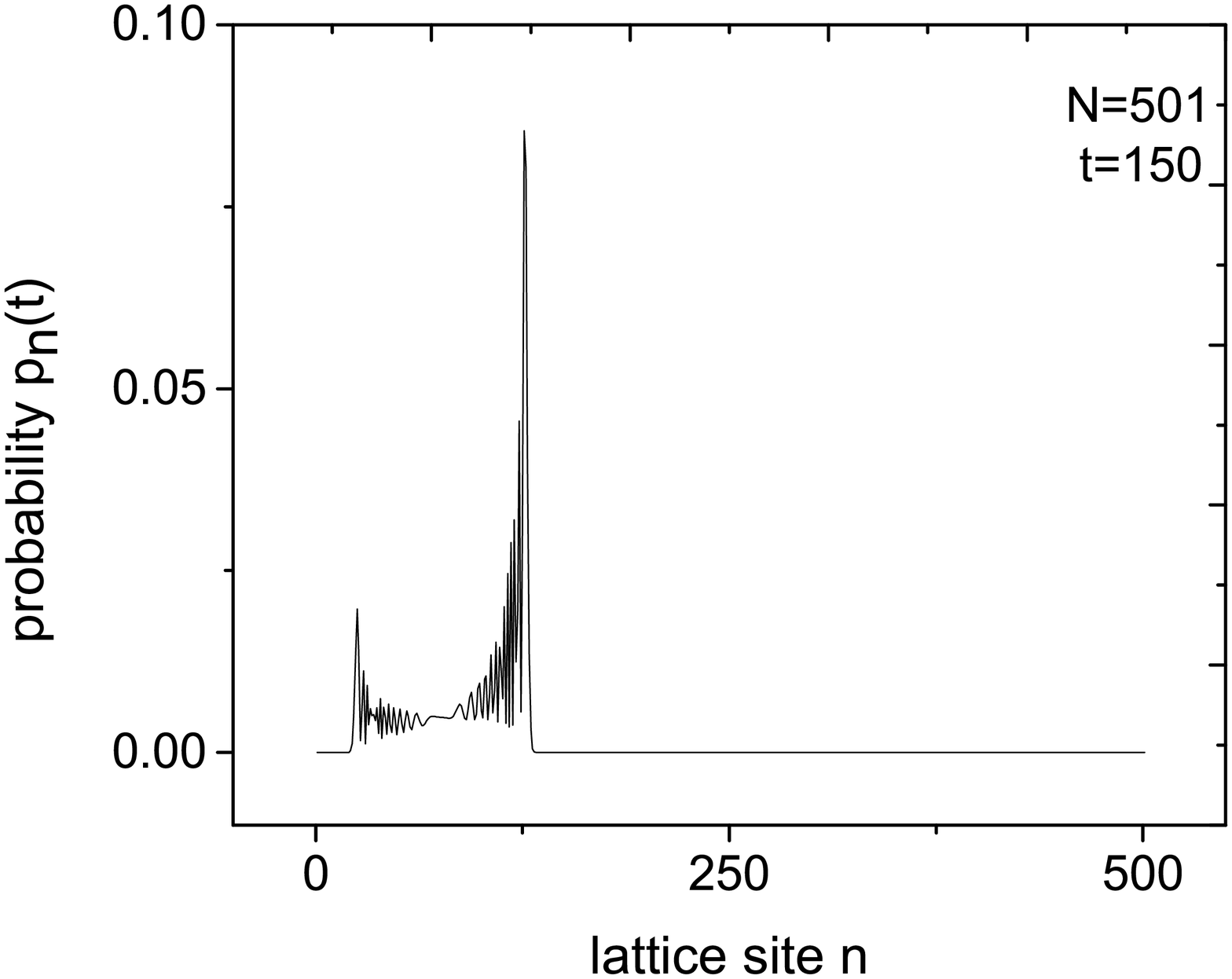}}
\subfloat{\includegraphics[width = 3in]{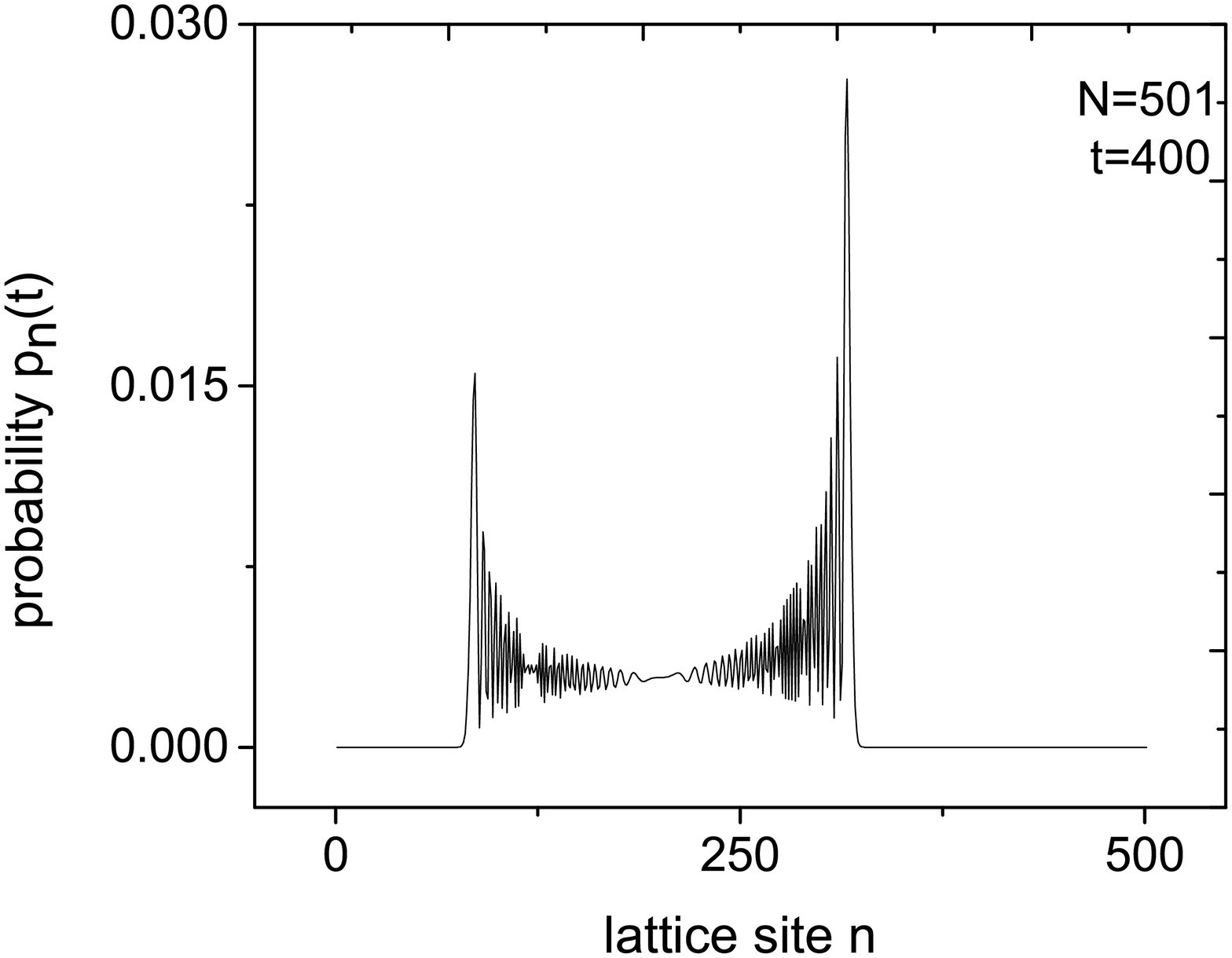}}
\caption{The probability distribution of position occupancy for the phase space walk at two different times shows a broadening very similar to the configuration space walk.}\label{fig:Probability distribution}
\end{figure*}
The phase space walk then shares the central property with the CSW in that the standard deviation increases linearly with time, as opposed to the classical diffusive $\sim \sqrt{t}$ \cite{kempe2003quantum, nayak2000quantum, childs2009universal}.
 The standard deviation, $\sigma(t)$, after time $t$ is found from, 
\beq
\sigma^2(t)=\sum_{n=0}^{N-1}n^2 p_n(t)-\left(\sum_{n=0}^{N-1}n\, p_n(t) \right)^2.
\eeq
The numerical evidence that the standard deviation grows linearly is given in Fig.~\ref{fig:Standard deviation}. The linear growth of the standard deviation is found in a certain range of the times, and before the onset of finite lattice size effects.  

\begin{figure}[]
\subfloat{\includegraphics[width = 3in]{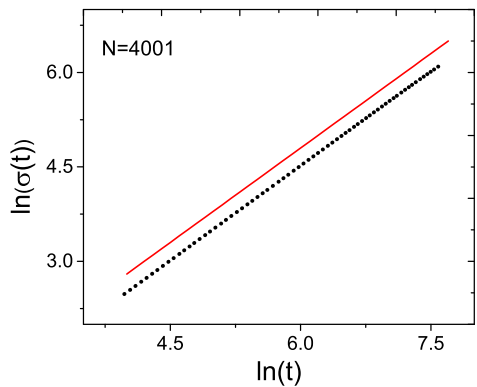}}
\caption{ Growth of the standard deviation of the walker position with time on a lattice of size $N=4001$ for the Hadamard phase space walk (dotted line). A solid line with unit slope is shown for comparison.}\label{fig:Standard deviation}
\end{figure}

However the standard deviation is only one aspect of the distribution of the walker and this can grow linearly even the case of trivial coins that produce no dispersion of the state.
Hence the participation ratio of the distribution, which is a measure of number of lattice sites that are significantly occupied at any particular time could be of interest. For example for $\theta=0$ the standard deviation grows as $t$, while the participation ratio is exactly $2$ at all times. This quantity does not seem to have been explored much in the context of the quantum walks (for some exceptions see \cite{lakshminarayan2003random,PhysRevA.94.023601,yalccinkaya2015two}), although it is widely used elsewhere to measure delocalization \cite{thouless1974electrons,
wegner1980inverse}.
The walker participation ratio at time $t$  is  
\beq
\label{eq:participation 4}
P(t)=\left(\sum_{n=0}^{N-1}p_n(t)^2\right)^{-1},
\end{equation}
and is such that $1\leq P(t) \leq N$, with the extremes indicating site localization and complete delocalization with equally likely site occupancies respectively. 

It is of interest first to find this for the classical walk. Assuming the case where the probability of traversing in both directions is same, the case that is relevant to compare with the Hadamard coin, the  inverse participation ratio after time $t$ is
\beq
P(t)^{-1}=\frac{1}{2^{2t}}\sum_{n=0}^t \binom{t}{n}^2=\frac{1}{2^{2t}}\binom{2t}{t}.
\eeq 
Using the Sterling approximation $n! \sim\sqrt{2\pi n}\left(n/e\right)^n$ it follows that the classical random walk participation ratio grows as
\beq
P(t) \sim \sqrt{\pi t}.
\eeq.
Hence the participation ratio shares with the standard deviation a slow normally diffusive growth. In contrast the quantum walks, both phase space version and the standard configuration space one produce distributions with lattice  participation ratio that grow faster:  $P(t) \sim t^\beta$, with $\beta \approx 0.825$, as Fig. \ref{fig:Participation Ratio} illustrates. Thus the quantum walker probability is considerably more delocalized but significantly does not seem to grow linearly in time. The power-law growth occurs in a time window, after an initial transient and before the finite size of the lattice affects the time-evolution. The latter time is shown as the divergence when $N$ is changed. As maybe expected, this time scales linearly,  $\sim \gamma N$, with $\gamma\approx0.25$ for PSW and $\gamma \approx 1.41$ for CSW.  This is of the order of the Heisenberg time and is naturally  much longer than the Ehrenfest time of the walk estimated below as $\sim \sqrt{N}$. It is interesting that the finite lattice effects happen much earlier for the phase space walk as compared to the CSW.
\begin{figure*}[]
  \subfloat{\includegraphics[width = 3in,height=2.3in]{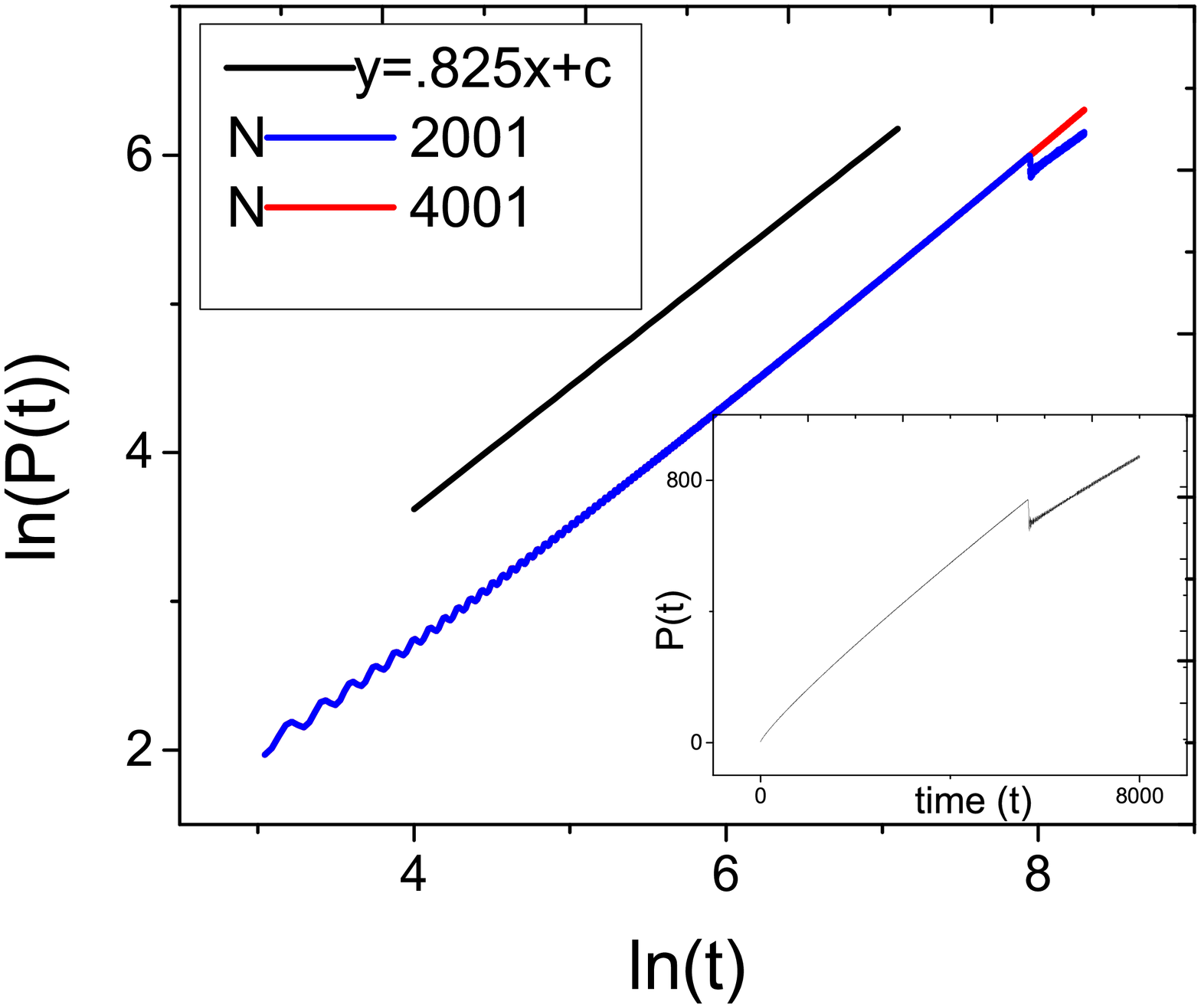}} 
    \subfloat{\includegraphics[width = 3in]{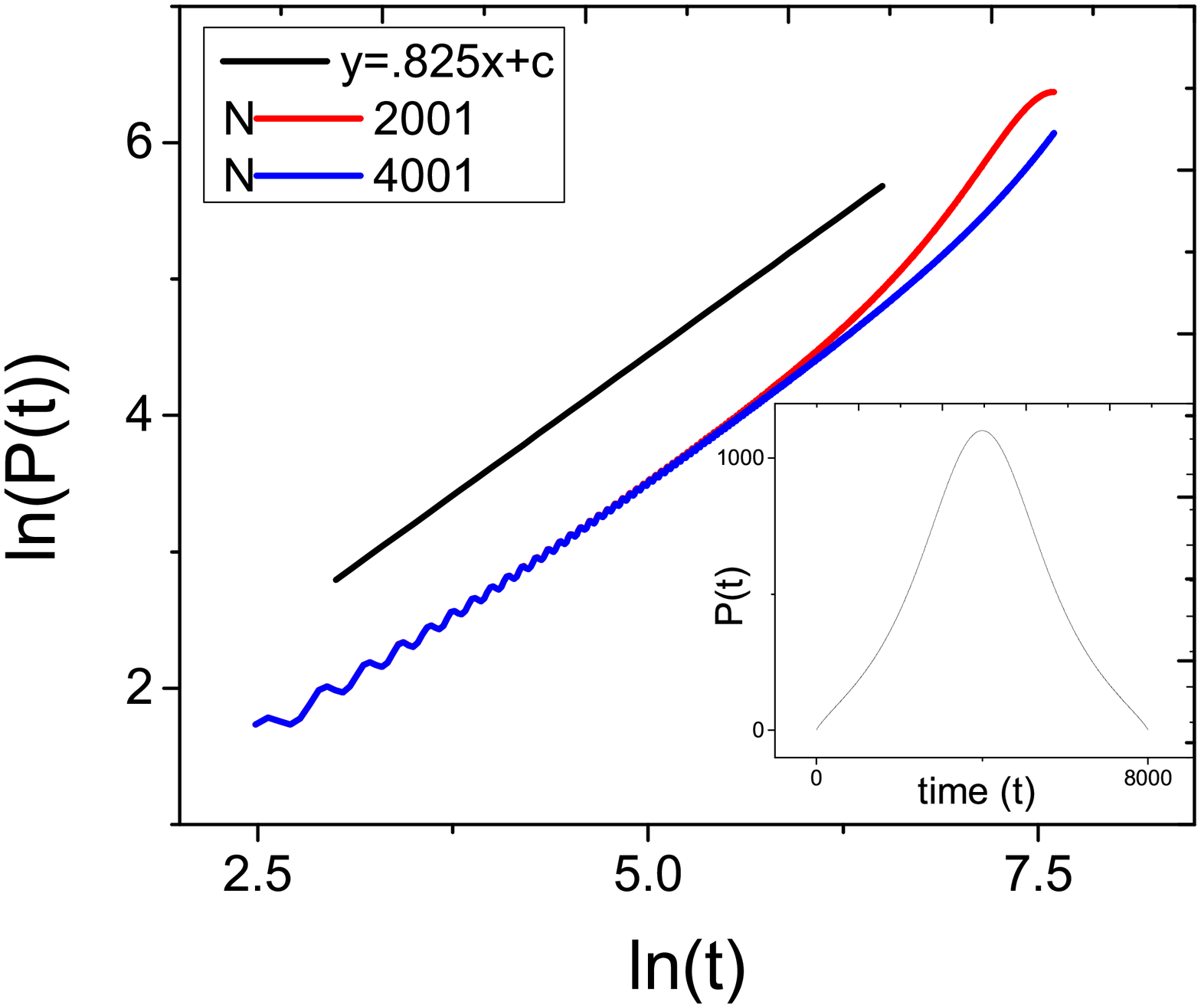}}
    \caption{Illustration of the power-law growth of the walker participation ratio for the configuration space walk (CSW, left) and phase space walk (PSW, right). Lattices of sizes $N=4001$ and $2001$ are used. The straight lines on the top have a slope of $0.825$. The insets are over longer times and show the different fates of the CSW which fluctuates around a saturation value (not seen in the figure) and the exact periodicity of the PSW with a period $2N$.}\label{fig:Participation Ratio}
  \end{figure*}
  
  The dependence of the participation ratio on the angle $\theta$ parameterizing the coin operator at various times for the CSW and PSW  is shown in Fig.~(\ref{fig:participation theta2}). This shows a marked dependence on the coin dynamics and also that the angle at which the maximum value of the participation ratio 
occurs is dependent on the time. In particular it is not true that the Hadamard coin is singled out, except at $t=N/2$ when the participation ratio is a maximum for this case.
However at other times there is a shift away from the Hadamard that corresponds to maximum delocalization of the walker. This is true for both the CSW and the PSW as shown in the figure. 

After time steps of the order of $N$, for PSW there is very large delocalization when the value of the participation ratio gets close to $N$ itself. This is seen when the coin angle $\theta \approx 0$. In contrast, this is not seen in the CSW and the maximum value of the participation ratio after $N$ time steps occurs for  $\theta \approx \pi/3$. Both kinds of walks have qualitatively similar behaviors for times well before the finite size effects start, the Heisenberg time, or  when the power-law growth exists. Thereafter quantum interference effects due to finite boundaries have very different effects on the walkers.

  \begin{figure*}[]
  
    \subfloat{\includegraphics[width = 3.2in]{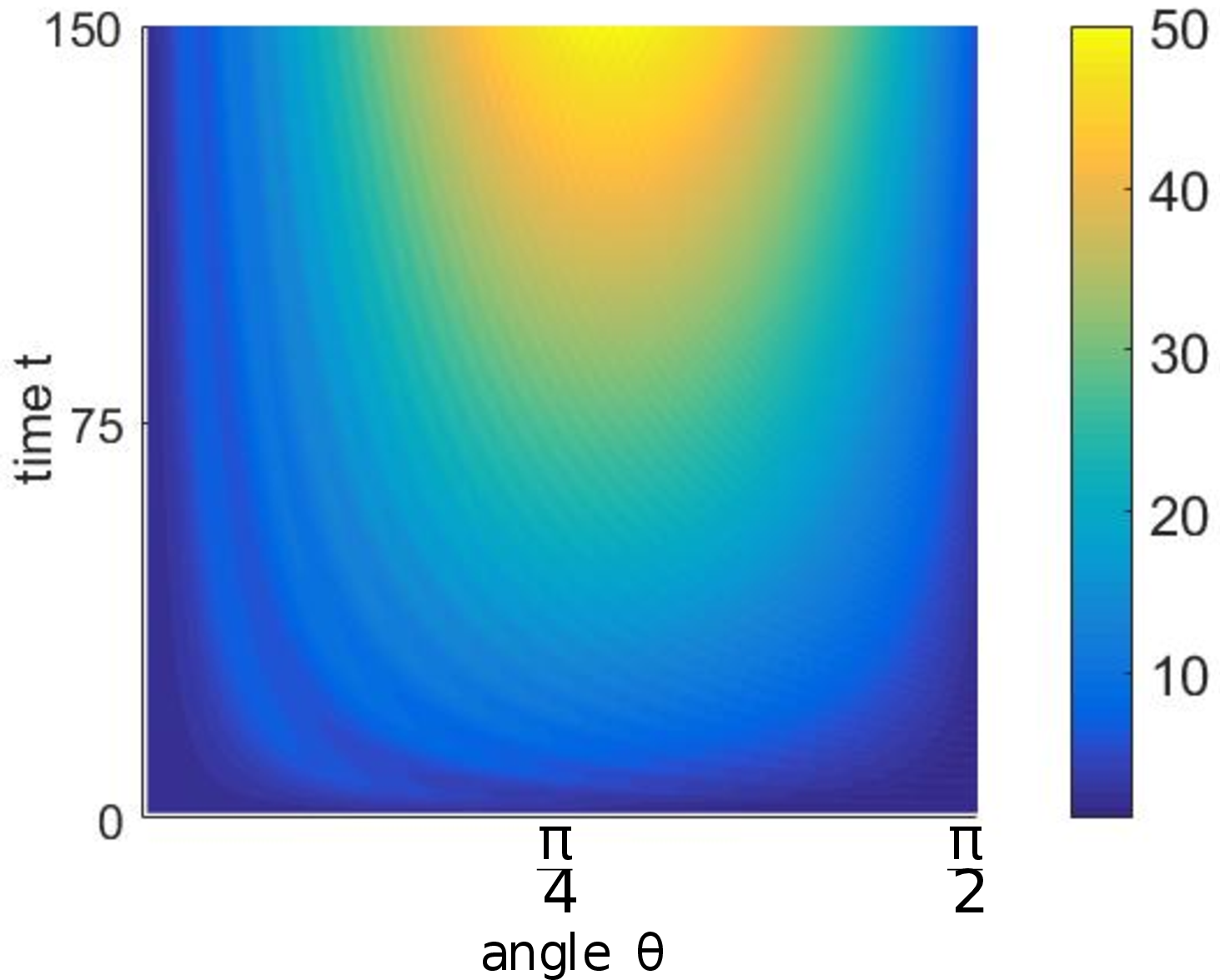}}
    \subfloat{\includegraphics[width = 3.2in]{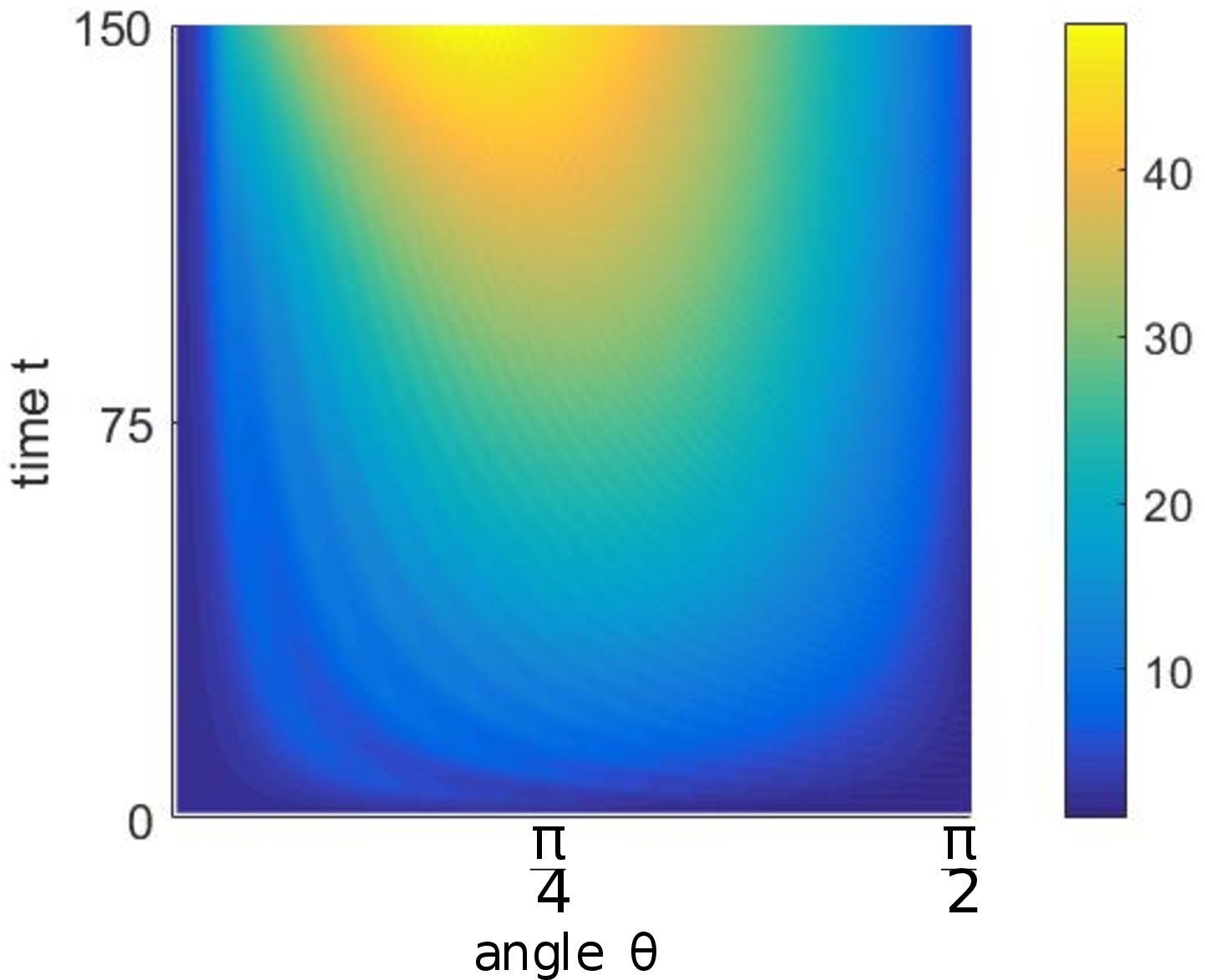}} \\
     \subfloat{\includegraphics[width = 3.2in]{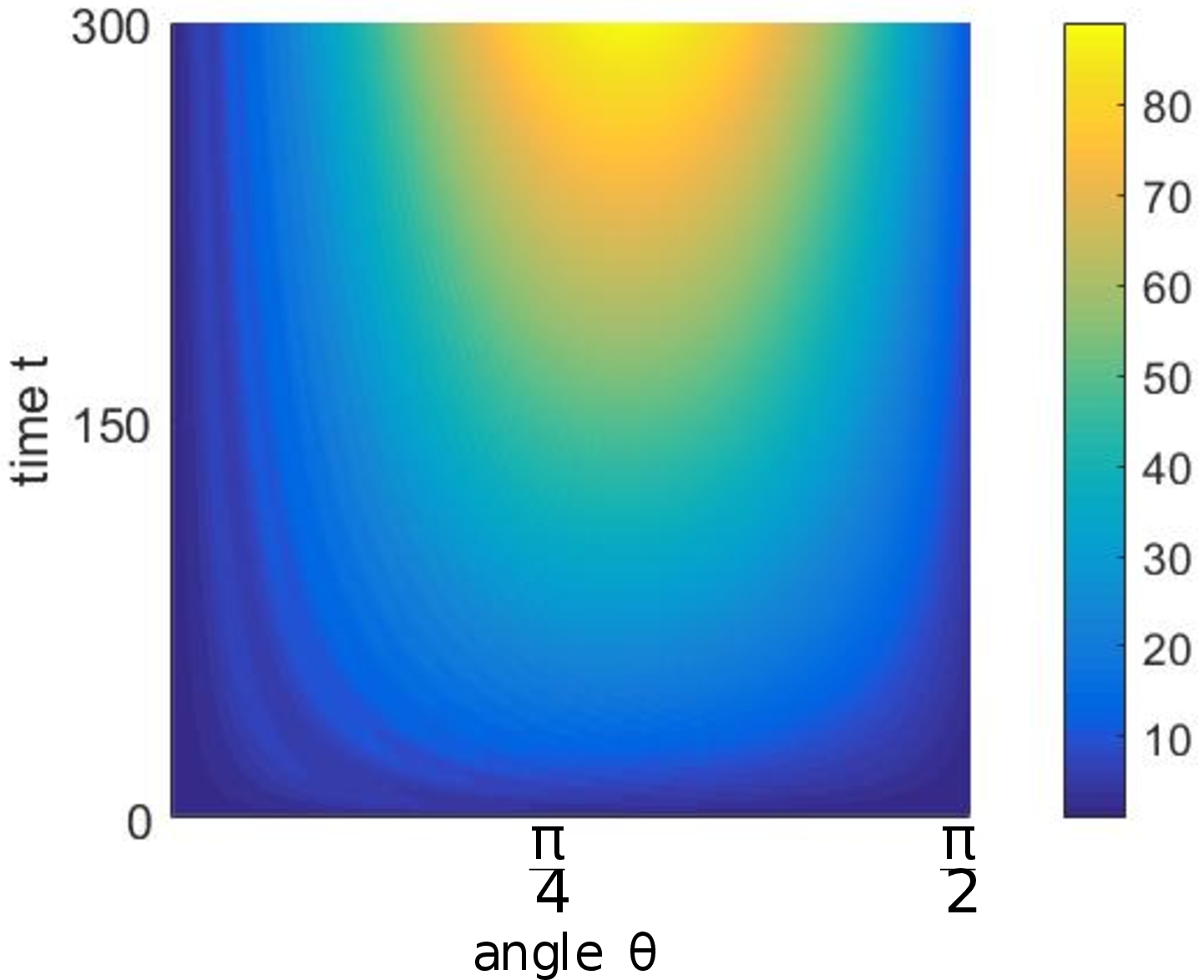}}
    \subfloat{\includegraphics[width = 3.2in]{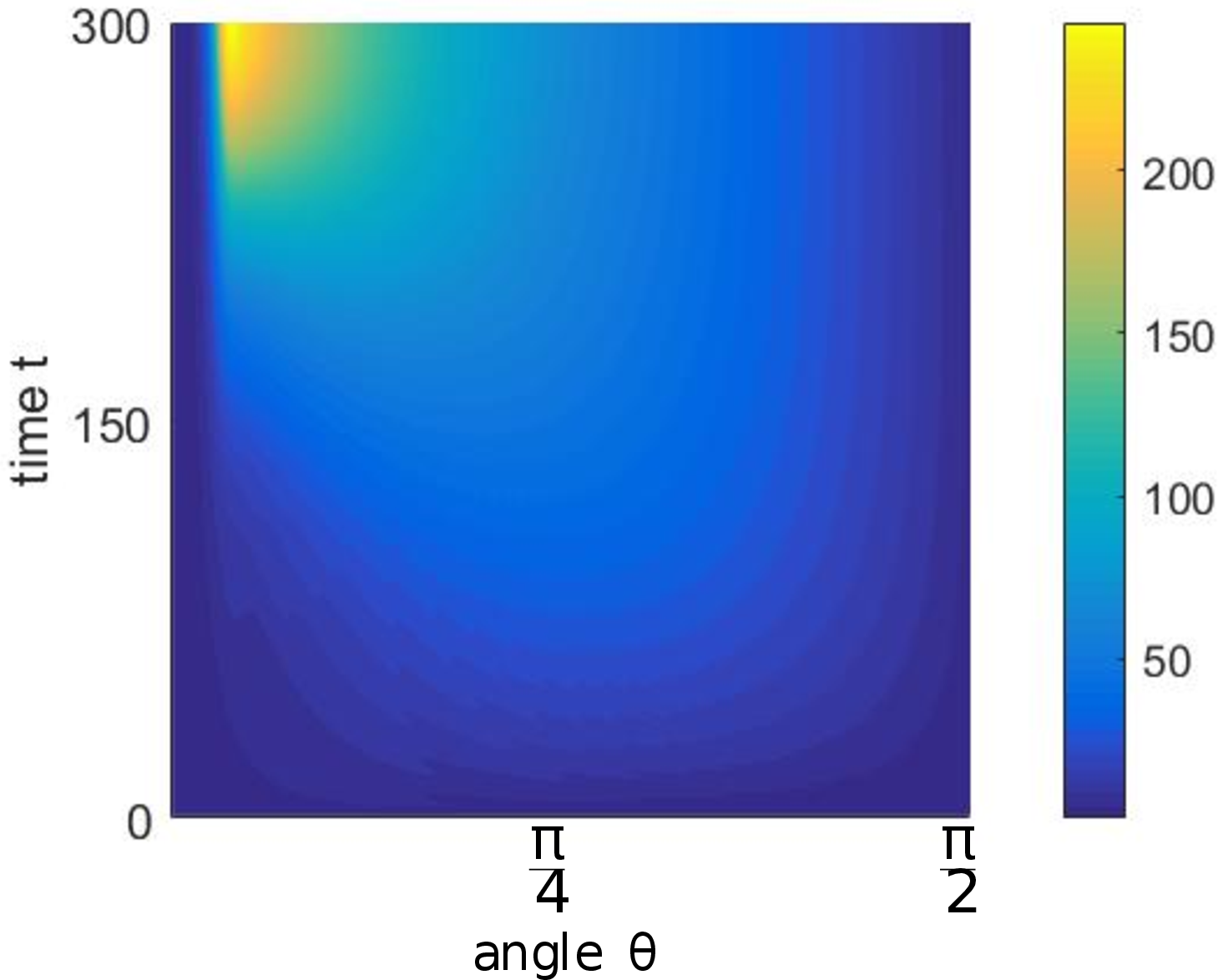}} 
   
    \caption{The dependence of the participation ratio on time $t$ and the coin angle $\theta$ for the CSW (left) and PSW (right). Here $N=301$  and the  initial state is $|{0}\kt\otimes |{0}\kt$. The top figures evolve upto $=150 \approx N/2$ steps and the bottom ones are upto $301 =N$. }\label{fig:participation theta2}
    \end{figure*}

   While taking the position or momentum state as the initial state of the walker, either momentum or position is respectively completely delocalized. As the walk is in a phase space it is natural therefore to consider the fate of walkers that are localized in phase space, therefore we turn to the case of an intial coherent state for the walker. A recent work \cite{zhang2016creating} studies variously delocalized Gaussian states in one-dimensional configuration space walks and find ``cat states" forming. We find that the PSW also produces such cat-states in phase space and that the walk has very regular structures in the coherent state case.

\subsection{Evolution of coherent states}
From the  reduced density matrix of the walker $\rho_w(t)$ the Husimi distribution \citep{PhysRevLett.55.645}, a psuedo-probability phase space distribution, can be constructed as,
\begin{equation}
\label{eq:Husimi 2}
W_{\rho_w}(p,q,t)=\bra{(q,p)}\rho_w(t)|{(q,p)}\kt.
\end{equation}
Where $|{(q,p)}\kt$ is a coherent state localized at $(q,p)$ in the phase space and can be for example constructed for the toral phase space using the Harper Hamiltonian's ground state as the fiducial state $|(0,0)\kt$ \citep{saraceno1990classical}. Here $0\leq p,q <N$ are the discrete pseudo-phase space variables, covering the torus. In the following the walker starts at the origin and with zero momentum, that is the fiducial state $|(0,0)\kt$ itself.
Shown in Fig.~\ref{fig:Evolution of PSW with coherent State} is the Husimi representation of the walker state in $U_{psw}^t |(0,0)\kt \otimes |\phi\kt$ for two initial states of the coin $|\phi\kt$, symmetric and asymmetric \citep{kempe2003quantum}: 
\begin{equation}
\begin{aligned}
\label{eq: coin-state}
|{\phi_{sym}}\kt=\frac{1}{\sqrt{2}}\left(|{0}\kt+i\|{1}\kt\right),\,\,
|{\phi_{asym}}\kt=|{0}\kt.
\end{aligned}
\end{equation}

The walker starting at the origin in phase space evolves approximately classically initially in the sense that it spreads out along the $p+q=t$ line but with a width
that comes from uncertainly in phase space. This phase lasts for about a time $\sim \sqrt{N}$, after which a ``split" into two peaks gets well-defined and these two 
separate and move into phase space. This is the creation of ``cat-states" and is completely non-classical. This phase is shown in Fig.~\ref{fig:Evolution of PSW with coherent State}, where the two peaks at various times are seen. It is found that  for the symmetric coin-state the phase space representation  has two symmetric  peaks about the line $p=q$. However for  the asymmetric case even though there are two peaks, they are of unequal magnitude. After a time $N$ the two cat-states merge once again at the center of the torus and then continue to grow apart till time $2N$ when due to the exact periodicity of the walk, the initial state is recovered.

\begin{figure*}
  \subfloat[Symmetric Initial coin-state]{\includegraphics[width = 3.1in]{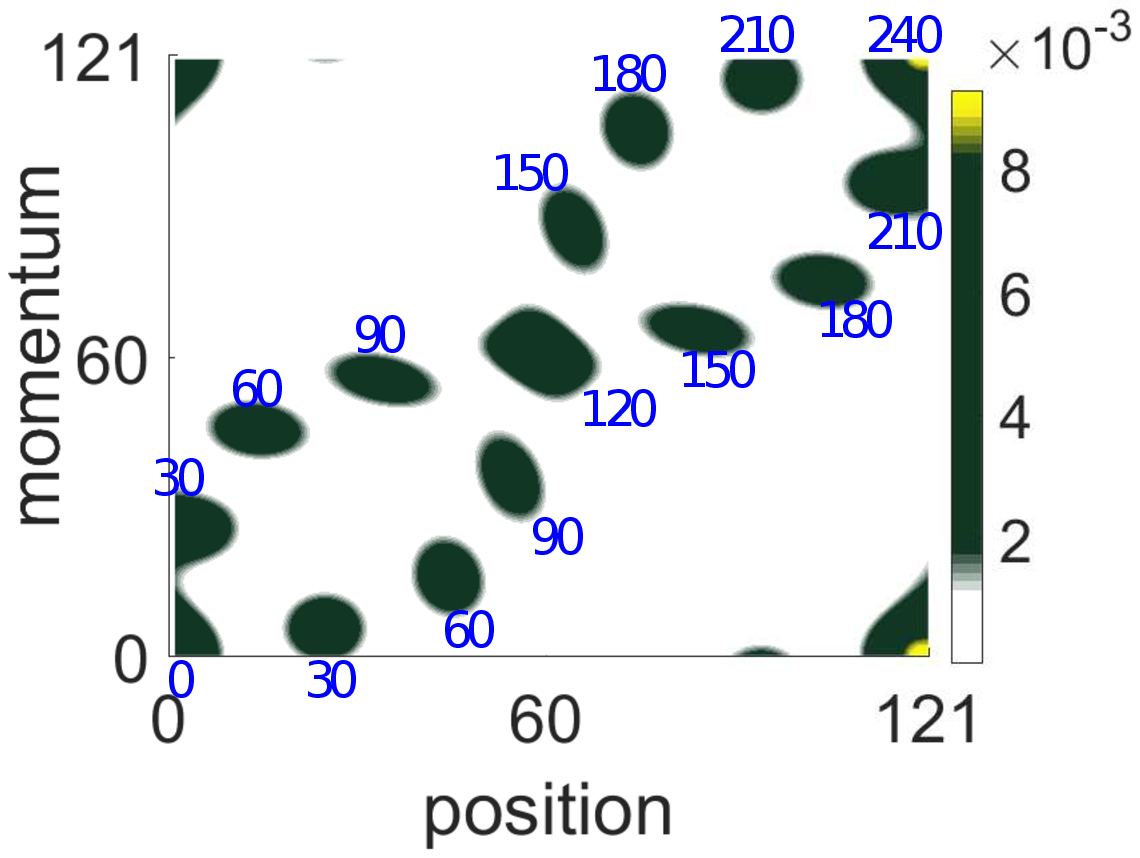}} 
\subfloat[Asymmetric Initial coin-state]{\includegraphics[width = 3in]{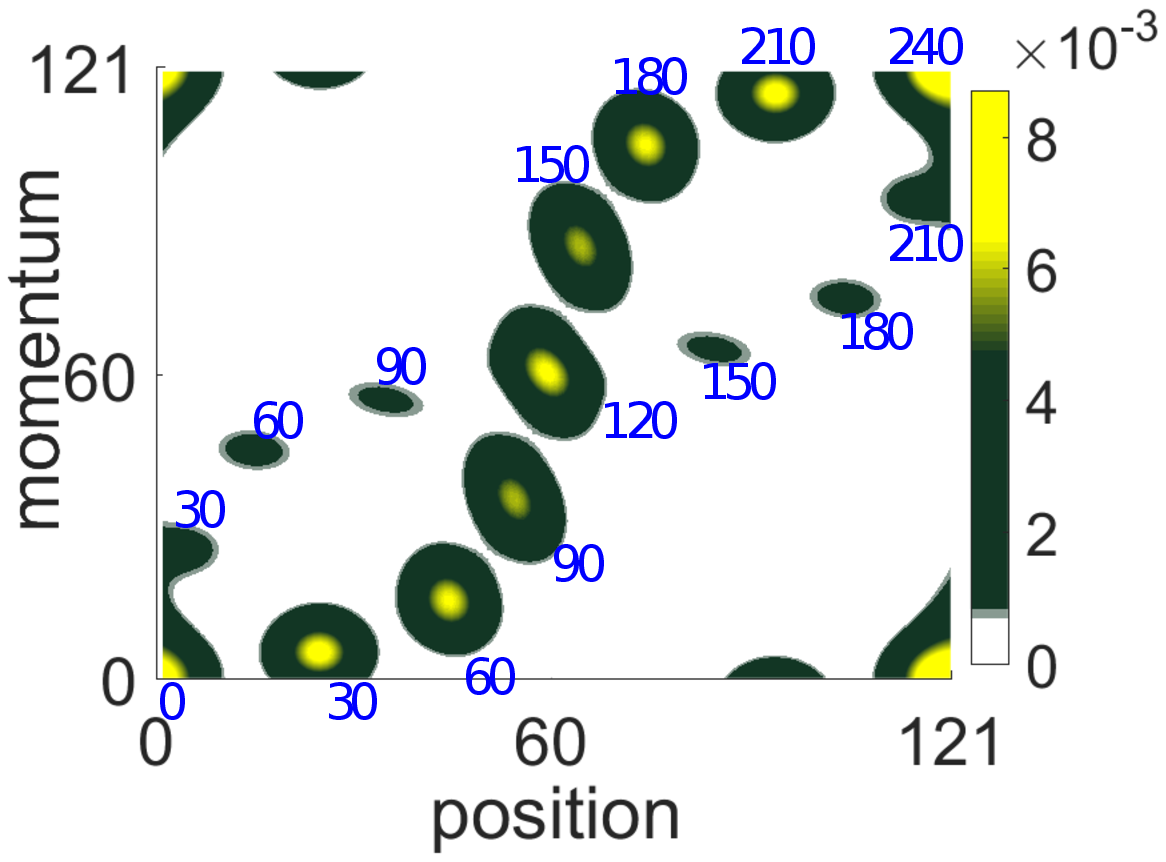}}
  \caption{The Husimi distribution of the time evolution of the walker state $|{(0,0)}\kt$  for the symmetric coin-state (right) and the  asymmetric coin-state (left) on a lattice of size  $N=121$. The Husimi distribution is shown on the same figure at various times indicated therein.}
\label{fig:Evolution of PSW with coherent State}\end{figure*}

Cat-states find applications in circuit QED\citep{zhang2017universal,girvin2017schrodinger}, quantum information processing\citep{gilchrist2004schrodinger}and quantum computation\citep{MIRRAHIMI2016778}  Hence the creation of cat-states are of fundamental importance in physics. Unlike the case of the CSW for PSW there is no translational invariance in the problem  and the cats are formed in phase space with varying momenta, whereas in the case of CSW all cat-states formed have the same momentum \cite{zhang2016creating}.  

The Husimi representation of the time evolution of the walker starting at the phase space origin can be understood as it resembles that of the walker reduced state of the eigenvectors of $U_{psw}$ with eigenvalues $\pm 1$, see Figs.~ (\ref{fig:Evolution of PSW with coherent State},\ref{fig:eigenvector1}). Only the Husimi of the (walker reduced) eigenvector with eigenvalue $+1$ is shown in the latter figure, that of the state with eigenvalue $-1$ is obtained by a reflection about the $q=p$ line. Indeed it is found that the  eigenvectors with eigenvalues close to $\pm 1$ contribute most to the initial state $|(0,0)\kt |0\kt$, and hence this structure dominates the time evolution. The probability distribution of the eigenvector with eigenvalue $+1$, after tracing out the coin, in the momentum and position basis, is also shown in  Fig.~(\ref{fig:eigenvector1}). This indicates that they are simply mutually shifted from each other. For an eigenvalue $\omega^{k/2}$  the position basis distribution has a maximum at $n=(k-N)/2\; \text{mod} \, N$ while the momentum basis distribution peaks at $n=(N-k/2)\; \text{mod} \,N$. This also follows from the momentum basis representation of the eigenstates that shows an interesting duality. The eigenvectors in the walker's momentum basis, is given up to normalization by $|\phi_k\kt=\sum_n (\tilde{a}_n(k) |\tilde{n}\kt |0\kt+ \tilde{b}_n(k) |\tilde{n} \kt |1\kt)$, where
\begin{subequations}
\label{eq:eigenvector21}
\begin{align}
\tilde{a}_n(k)=&b^{-1}_n(2N-k)=-\omega^{\frac{n(n-1)}{2}}\tan \theta\, \dfrac{\left(\sec\theta\,\, \lambda_k^{-1};\omega^{-1}\right)_n}{\left(\sec\theta\, \,\lambda_k;\omega\right)_{n+1}},\label{subeqn:an1}\\
\tilde{b}_n(k)=&a^{-1}_n(2N-k)=\omega^{-\frac{n(n-1)}{2}}\frac{\left(\sec\theta\,\,\lambda_k^{-1};\omega^{-1}\right)_n}{\left(\sec\theta\,\, \lambda_k;\omega\right)_{n}}\label{subeqn:bn1}.
\end{align}
\end{subequations}
Thus interestingly the $a_n$ that appear in the momentum representation are related to the $b_n$ in position and are not pure phases, while the $b_n$ of the momentum basis are pure phases.
   
\begin{figure*}
\subfloat{\includegraphics[width = 3in]{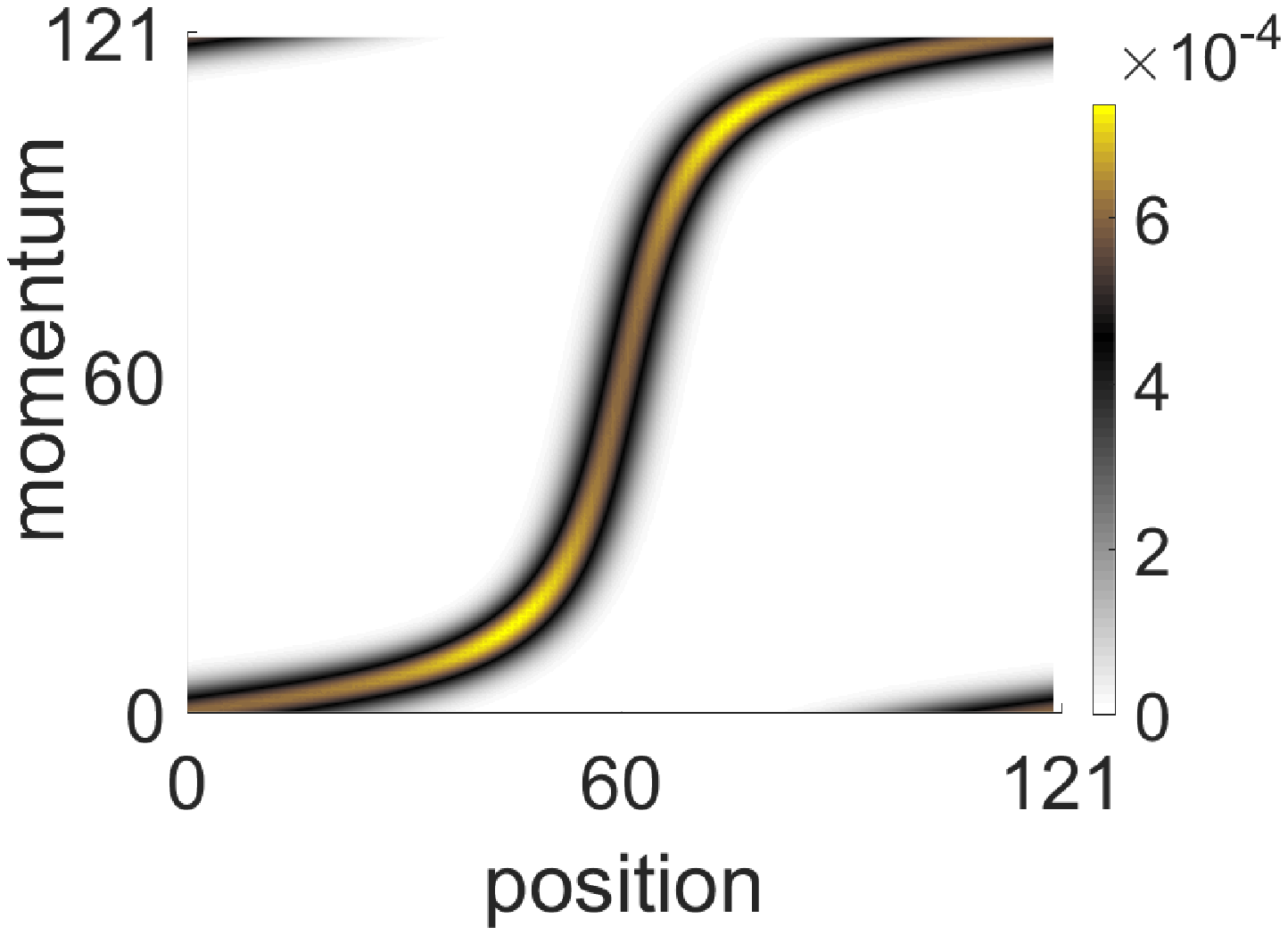}}
\subfloat{\includegraphics[width = 2.8in]{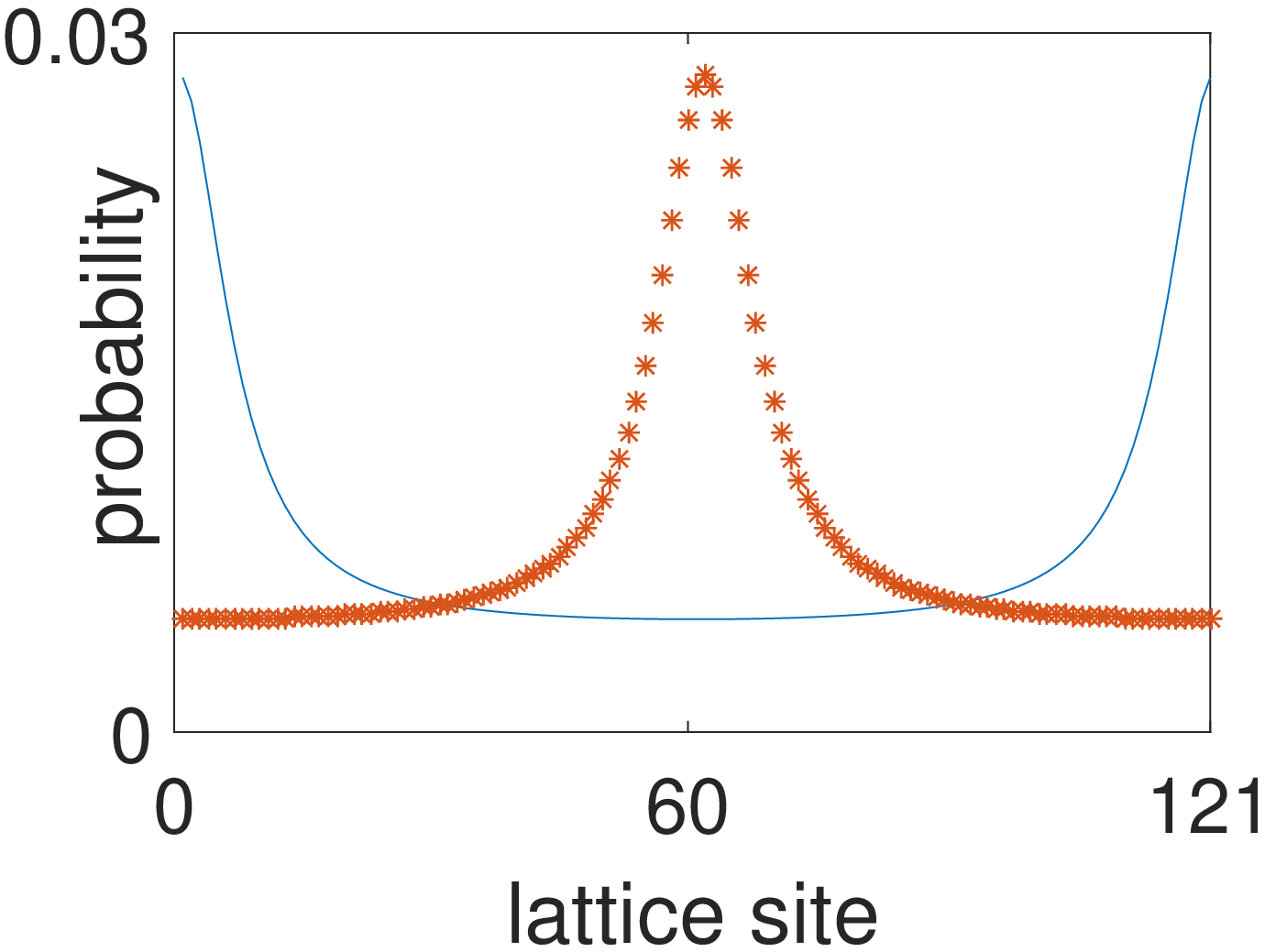}}
\caption{Husimi distribution of the walker reduced eigenvector with eigenvalues $+1$(left) for the case $N=121$. The probability distribution of the same is shown in the position basis (points) and the momentum basis (solid line).}\label{fig:eigenvector1}
\end{figure*}
In \cite{zhang2016creating} the growth of entanglement with time has been studied as an indicator of the formation of cat-states, with the entanglement nearly saturating with their formation.  The entanglement growth in the PSW is shown in Fig.~(\ref{fig:Entanglement}), where one finds a similar behavior, but also that the saturation value depends on the initial state. For the case of the symmetric coin-state the maximum value of entanglement (von Neumann entropy) appears to be very close to $1$, and hence the coin and walker get maximally entangled. However for the asymmetric case this value is not achieved, and moreover the entanglement growth is not monotonic, nevertheless a saturation seems to happen at large $N$ for to an entropy of about $0.6$.

\begin{figure*}
   \subfloat{\includegraphics[width =3in]{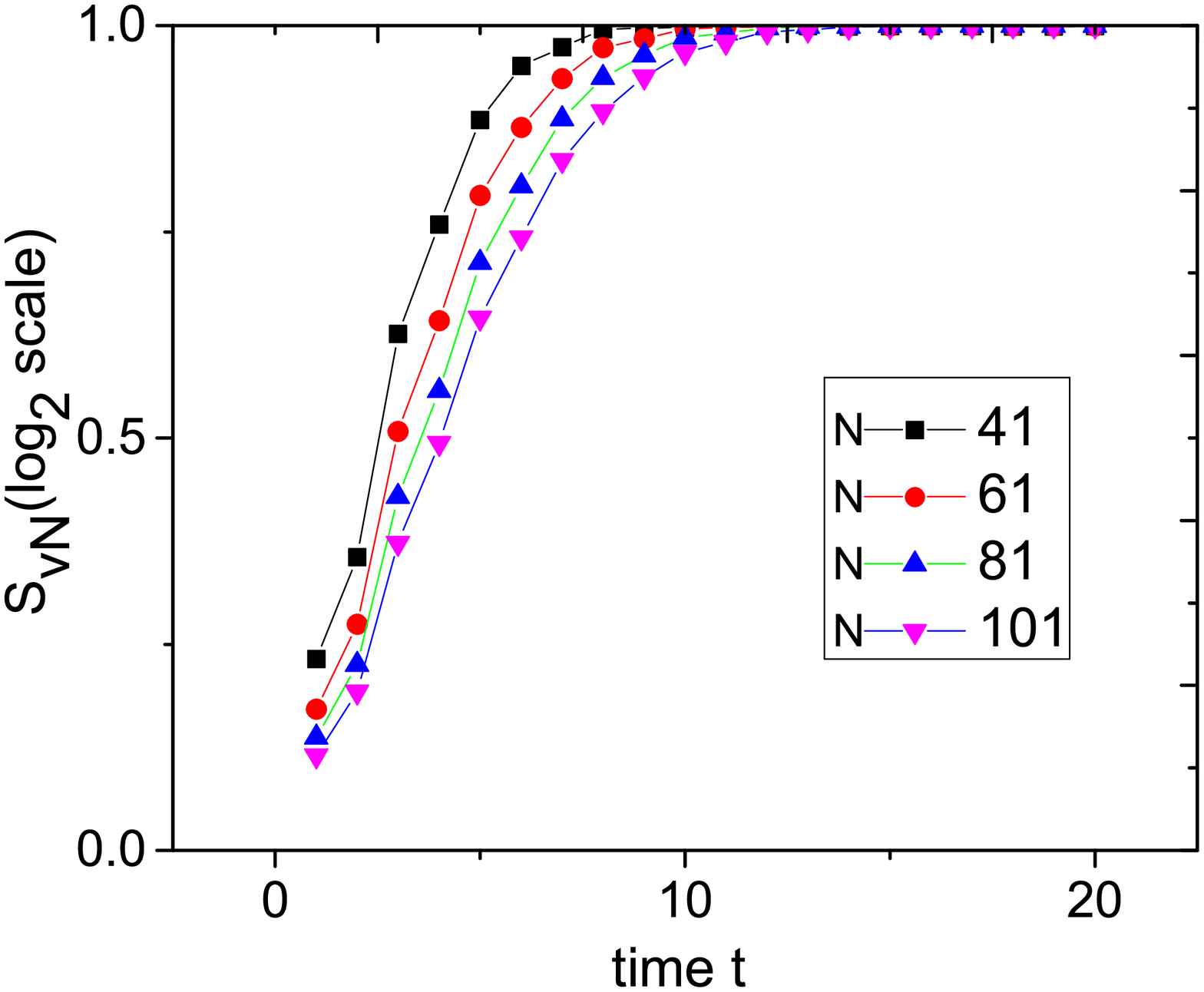}}
\subfloat{\includegraphics[width =3in]{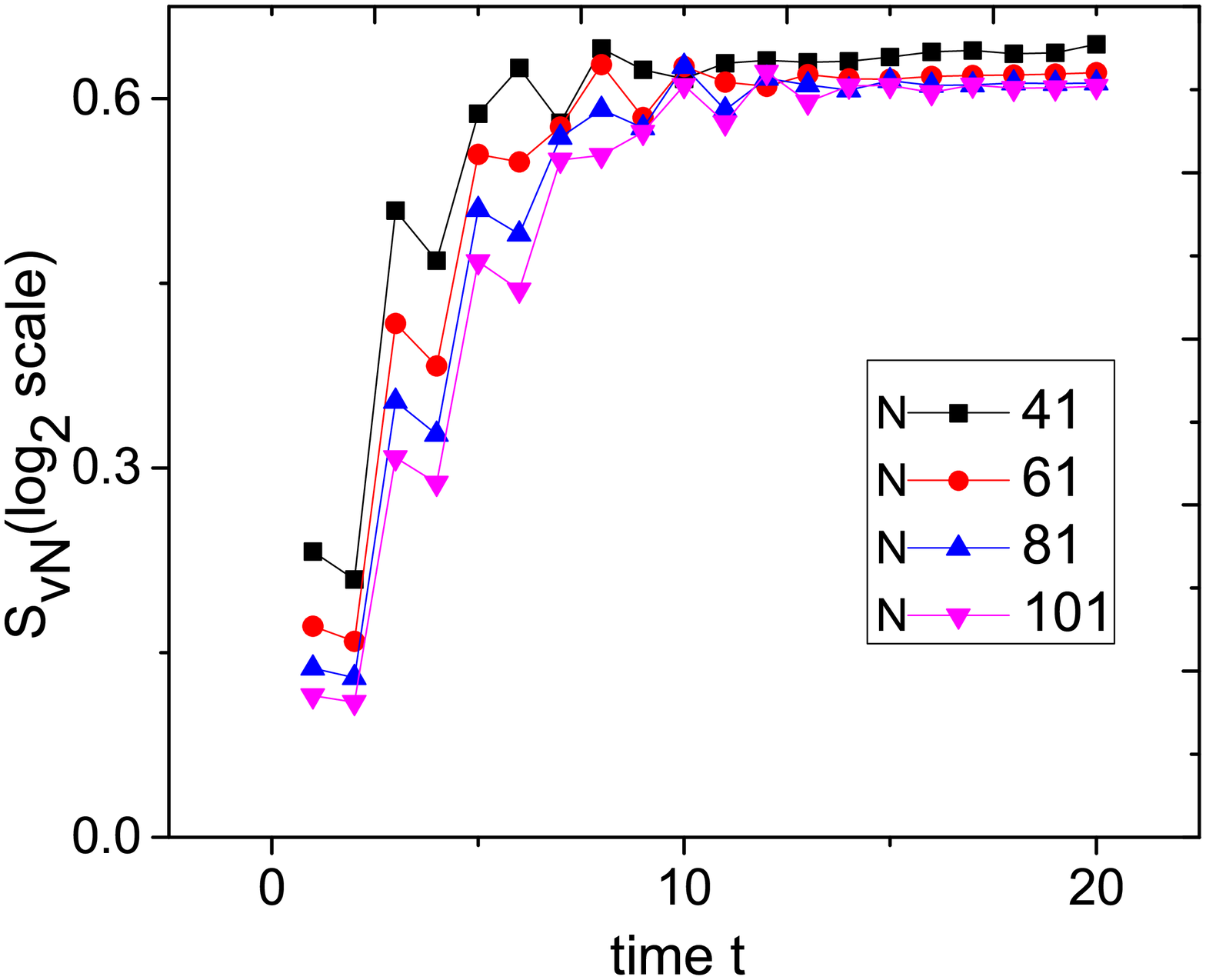}}
\caption{Growth of the coin-walker entanglement with time for a symmetric initial coin state (left) and asymmetric coin state (right) and an initial coherent state for the walker localized at the phase space origin. For the symmetric as well as asymmetric cases the entanglement reaches a saturation value after some time, however  for the symmetic case entanglement achieves nearly the maximum possible  value($1$) but for  asymmetric case the saturation value depends on the dimension on the lattice under consideration.}\label{fig:Entanglement}
\end{figure*}

The time to attain the maximum value of entanglement is an indicator of the onset of quantum interference effects and is therefore an Ehrenfest time $t_E$ of the quantum walk. In the case of symmetric initial states, the time taken to reach within $10^{-3}$ of the maximum entanglement is taken as $t_E$. In Fig.~(\ref{fig:Classical}) this is shown for different lattice dimensions $N$ for the PSW. This suggests the growth $t_E \sim \sqrt{N}$, a feature that we also verified holds for the CSW and for asymmetric initial coin states. Thus there is an algebraic growth of the Ehrenfest time and this is large in comparison to quantum chaotic maps on the torus that support a logarithmic Ehrenfest time $\sim \ln(N)$. Thus this also illustrates the lack of quantum chaos in quantum walks, at least of the kind considered here. This is also consistent with the behavior of out-of-time-ordered correlators of the quantum walk that increases as $t^2$ rather than exponentially \cite{sivaarul}.

\begin{figure}
\subfloat{\includegraphics[width = 3 in]{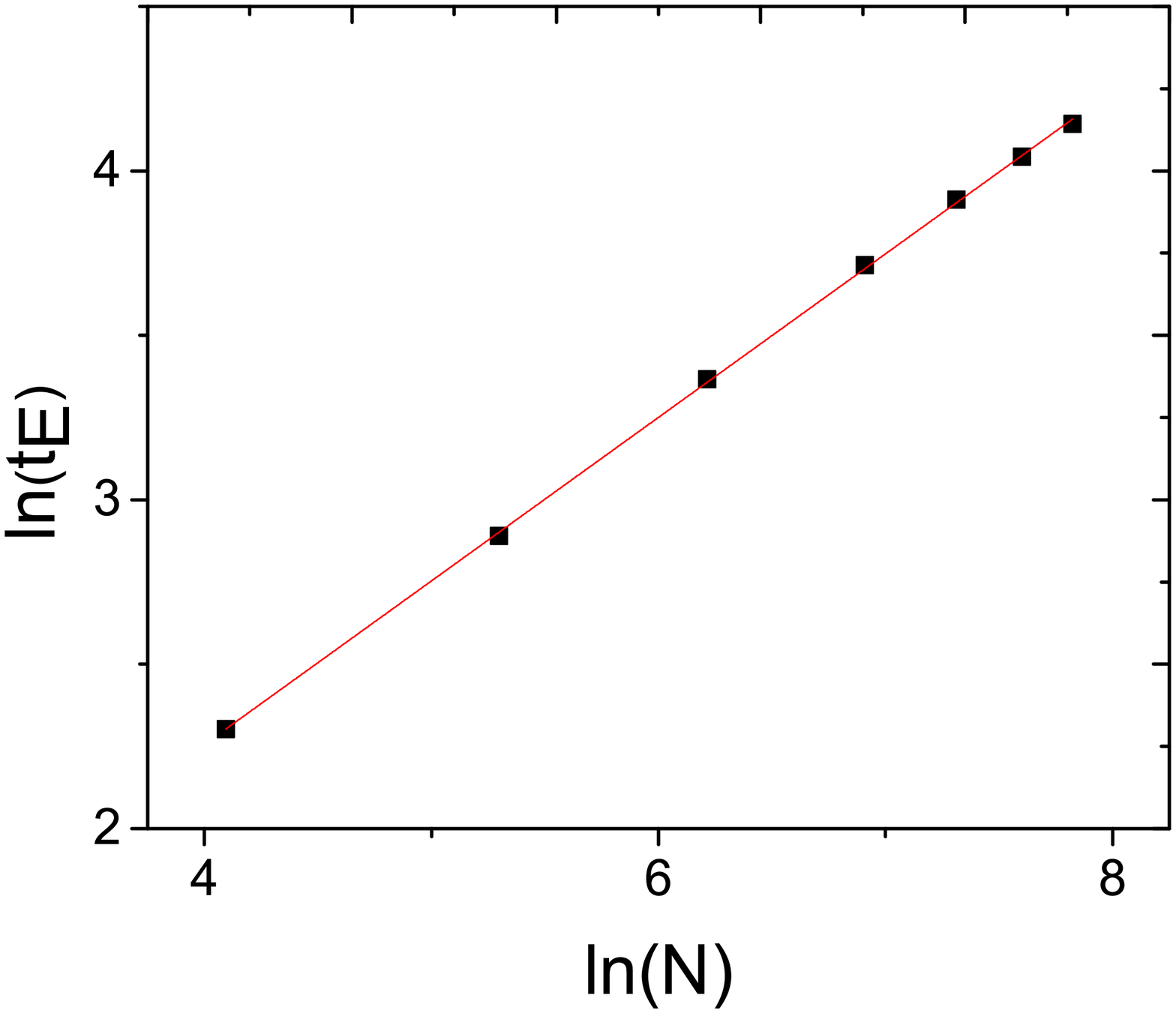}}
\caption{The Ehrenfest time, indicating the onset of quantum effects, {\it vs} lattice dimensions, showing that it grows as $\sqrt{N}$}\label{fig:Classical}
\end{figure}
\section{Summary and conclusion}
This paper introduced a quantum walk in toral phase space (PSW), and the walker could either change her position or her momentum depending on a coin toss.
The unitary operator corresponding to this thus has simply either translations in positions or boosts in momentum. A detailed study was possible as it turned out that the spectra is exactly solvable. The eigenangles are equally spaced on the circle for the case of odd dimensionality (and there is a very similar structure for the even) and the eigenvectors can be written either in terms of the $q-$Pochhammer symbols or in terms of elementary functions. This is true for an entire family of coin operators and interestingly the eigenvalues do not change (for odd lattice dimensionality) within this coin family. The eigenvectors also interpolate within this family in an interesting manner and it is possible to also find exactly the entanglement in them. 

The time-evolution of the walker with an initial state that is either site localized or a coherent state that is phase-space localized was both considered. The PSW and CSW share many common features, for example the growth of the standard deviation and participation ratio occur with the same power laws. The participation ratio which is a measure of delocalization of the walker has, to our knowledge, not been explored in the context of quantum walks. This quantity also grows at a power law that is higher than the classical walker and hence the quantum walker does get more delocalized with time. The delocalization of the walker at any given time as measured by the participation ratio has an interesting dependence on the angle parametrizing the biasedness of the coin. In particular, the most unbiased coin as represented by the Hadamard is not necessariy the most delocalizing one.

When the initial state of the walker is a coherent state in phase-space ``cat-states" are formed in the case of the PSW, and may lead to interesting consequences. Such cat-states have also been reported recently for the CSW in \cite{zhang2016creating}. Their formation is concomitant with the generation of maximum entanglement between the walker and the coin. This happens at a time $t_E \sim \sqrt{N}$ and maybe considered as an Ehrenfest time of the walk. We have explored, but not reported, the case of higher dimensional coins and more possibilities for the walker in phase-space, a natural case is when both the position and momentum can also decrease by a unit and the coin is four dimensional. In this scenario cat-states localized at more than two distinct parts of the phase space were observed.  

The similarity of the walk studied here with other walks including the electric-walk has been noted. One future direction may be in studying such phase-space walk without a toral phase space and the effect of an irrational multiple of $2 \pi$ for the angle $\alpha$ in 
the definition of the phase-space walk in Eq.~(\ref{eq:psw}). Others naturally include exploration of a more complete family of coins (we have only considered a subset of $SU(2)$), higher dimensional coins and phase-spaces and possible experimental realizations. We note that one realization with ion traps is already studied as a 
phase-space walk in the sense that it consisted of walk among a one-dimensinal lattice of coherent states \cite{schmitz2009quantum}. As the elements of the walk considered here involve only the fundamental acts of translations in position and momentum, it is conceivable that it is realized in many different setups.

\section{Acknowledgment}
We thank C. M. Chandrashekar (I.M.Sc., Chennai), Sandeep K. Goyal (IISER Mohali) and Prabha Mandayam (I.I.T. Madras) for valuable discussions and remarks. SO would like to specially thank Prabha Mandayam for funding from Department of Science and Technology, via INSPIRE Project No. PHY1415305DSTXPRAN, hosted at I.I.T. Madras.

\appendix
 \section{Evaluation of a sum in the normalization of eigenstates}
The constant appearing in the normalization of an eigenstates with eigenvalue $\lambda_k$ and lattice dimension $N$ is, $C_N(k)= \sum_{n=0}^{N-1} \left(a_n(k)a_n^*(k)+b_n(k)b_n^*(k)\right).$ Now using the Eq.~\eqref{eq:b-a:relation} and  $\vert a_n(k)\vert=1$  yields, 
 \begin{equation}
  \label{eq:normalization2}
C_N(k)=N+ \sum_{n=0}^{N-1}\dfrac{\tan^2 \theta\,}{1+\sec^2\theta\,+\sec\theta\,(\lambda_k\omega^{-n}+\lambda_k^{-1}\omega^n)}.
 \end{equation}
Let the second term in the RHS of the above equation  be $I$. 

For odd $N$ using  Eq.~\eqref{eq:eigenvalue odd} and the Poisson summation formula (for example see \citep{benedetto1997sampling} for a extensive discussion) gives
\beq
\begin{aligned}
\label{eq:I1}
I=& \sum_{n=0}^{N-1} \dfrac{\tan^2 \theta}{1+\sec^2\theta+2 \sec \theta \cos(2 \pi(n-\frac{k}{2})/N)}\\
=&N\int_{0-\epsilon}^{1-\epsilon }\dfrac{\tan^2 \theta \, dx}{1+\sec^2\theta+2 \sec \theta \cos(2 \pi(x-\frac{k}{2}))}\\
+2 N&\sum_{m=1}^{\infty}\int_{0-\epsilon}^{1-\epsilon}\dfrac{\tan ^2 \theta \cos(2\pi m N x)\, dx}{1+\sec^2\theta+2\sec\theta\cos(2 \pi (x-\frac{k}{2}))}.
\end{aligned}
\eeq
However, irrespective of the parity of $k$, 
\beq
\int_{0}^{1}\left(\dfrac{\tan^2 \theta \, dx}{1+\sec^2\theta+2 \sec \theta \cos(2 \pi (x-\frac{k}{2}))}\right)=1,
\eeq and
 \begin{equation}
\label{eq:eq2}
\begin{split}
\int_{0}^{1}\dfrac{\tan ^2 \theta \cos(2\pi m N x) \, dx}{1+\sec^2\theta+2\sec\theta\cos(2 \pi (x-\frac{k}{2}))}\\=(-1)^{(k+1)N}\cos^{Nm} \theta.
\end{split}    
    \end{equation}
Hence 
\[ I=N \left(-1+\frac{2}{1+(-1)^k \cos^N\,\theta} \right),\]
 which in turn yields,
\begin{equation}
\label{eq:nn1}
C_N(k)=\frac{2N}{1+(-1)^k \cos^N\,\theta}.
\end{equation} 

For even $N$  using Eq.~\eqref{eq:gh}, the equivalent of Eq.~\eqref{eq:I1} is, 
\begin{equation}
I'= N +2 \sum_{m=1}^{\infty}\int_{0}^{1}\dfrac{\tan ^2 \theta \cos (2\pi m N x)\cos(mN\alpha)  \, dx}{1+\sec^2\theta+2\sec\theta\cos 2 \pi x}.
\end{equation} 
However  using Chebyshev Polynomial of first kind $(T_m)$ and Eq.~\eqref{eq:gh}, $\cos(mN\alpha)= \cos( m \cos^{-1}(\cos^N\theta))=T_m(\cos^N\theta)$ and hence, 
\begin{equation}
I'= N +2 N \sum_{m=1}^{\infty}\int_{0}^{1}\dfrac{\tan ^2 \theta \cos (2\pi m N x) T_m(\cos^N\theta) \, dx}{1+\sec^2\theta+2\sec\theta\cos 2 \pi x}.
\end{equation} 
Now using \eqref{eq:eq2} and $T_{0}=1$ yields, $
C_N(k)=2N \sum_{m=0}^{\infty}T_n(\cos^n \theta)\cos^{mN}\theta$ and using the property of ordinary generating function of Chebyshev Polynomial of first kind \citep{mason2002chebyshev} given as, $
 \sum_{t=0}^{\infty}T_n(x)t^n=(1-tx)/(1-2xt+x^2) $ yields,  
  \begin{equation}
C_N(k)=2N\frac{1-\cos^{2N} \theta}{1-2\cos^{2N} \theta+\cos^{2N} \theta}=2N.
\end{equation}
Hence the normalization constant is surprisingly simple and is independent of eigenvalues. Note that for large enough $N$, even if $N$ is odd $C_N(k) \approx 2 N$.

\label{app:norm}
\section{ Evaluation of a sum appearing in the eigenvector reduced density matrices}
The sum to be evaluated in Eq.~(\ref{eq:at1}) is
\beq
I''=\sum_{n=0}^{N-1}\frac{1}{\cos \theta+ \lambda_k \omega^{-n}},
\eeq
with $\omega=e^{2 \pi i /N}$.
For even $k$, $\lambda_k= \omega^k$, and the sum can be re-written as $I''=$
\[
\begin{split}
= \sum_{n=0}^{N-1}\omega^{n} \sum_{l=0}^{\infty}(-\cos \theta\,\omega^{n})^l=\sum_{l=0}^{\infty} \cos^l \theta \sum_{n=0}^{N-1}\omega^{n(l+1)}.
\end{split}
\] 
Using $ \sum_{n=0}^{N-1}\omega^{n(l+1)}=N\, \delta[(l+1)\, \text{mod}\,N, 0]$ the sum becomes
\beq
I'' = N\sec \theta \sum_{l=1}^{\infty}(-\cos ^N \theta)^l=
\frac{N  \cos^{N-1} \theta}{1+\cos^N \theta}.
\eeq
For  $k$  odd  $\lambda_k= -\omega^k$, and following the same steps as above gives $I''=- N  \cos^{N-1} \theta/(1-\cos^N \theta)$ and hence (in all cases $N$ is odd),
\beq
I''=(-1)^k \frac{N \cos^{N-1} \theta}{1+(-1)^k \cos^N \theta}.
\eeq
\label{app:off diagonal}
\bibliography{pswpaper}
\end{document}